\documentclass[11pt]{article}


\usepackage[utf8]{inputenc}
\usepackage[T1]{fontenc}
\usepackage[sc]{mathpazo}
\usepackage{microtype}
\usepackage{listings}
\usepackage{amsmath}
\usepackage{amssymb}
\usepackage{amsthm}
\usepackage{dsfont}
\usepackage[noblocks]{authblk}
\usepackage{fullpage}
\usepackage{parskip}
\usepackage{comment}
\usepackage{tikz}
\usepackage{mathtools}
\usepackage[shortlabels]{enumitem}
\usepackage[ruled]{algorithm2e}
\usepackage{complexity}
\usepackage[backend=biber, style=alphabetic, maxbibnames=99]{biblatex}
\usepackage[breaklinks]{hyperref}
\usepackage[nameinlink]{cleveref}
\usepackage{nag}


\begingroup
\makeatletter
\@for\theoremstyle:=definition,remark,plain\do{%
    \expandafter\g@addto@macro\csname th@\theoremstyle\endcsname{%
        \addtolength\thm@preskip\parskip
        }%
    }
\endgroup


\newcommand{\declarecolor}[2]{\definecolor{#1}{RGB}{#2}\expandafter\newcommand\csname #1\endcsname[1]{\textcolor{#1}{##1}}}
\declarecolor{White}{255, 255, 255}
\declarecolor{Black}{0, 0, 0}
\declarecolor{LightGray}{216, 216, 216}
\declarecolor{Gray}{127, 127, 127}
\declarecolor{Orange}{237, 125, 49}
\declarecolor{LightOrange}{251,229, 214}
\declarecolor{Yellow}{255, 192, 0}
\declarecolor{LightYellow}{255, 242, 200}
\declarecolor{Blue}{91, 155, 213}
\declarecolor{LightBlue}{222, 235, 247}
\declarecolor{Green}{112, 173, 71}
\declarecolor{LightGreen}{226, 240, 217}
\declarecolor{Navy}{68, 114, 196}
\declarecolor{LightNavy}{218, 227, 243}

\hypersetup{
	colorlinks=true,
	pdfpagemode=UseNone,
	citecolor=Green,
	linkcolor=Navy,
	urlcolor=Black,
	pdfstartview=FitW
}

\newcommand{\declareperson}[1]{\expandafter\newcommand\csname#1\endcsname[1]{\textcolor{Orange}{#1: ##1}}}

\declareperson{Nima}
\declareperson{Alireza}


\theoremstyle{plain}
\newtheorem{theorem}{Theorem}
\newtheorem{lemma}[theorem]{Lemma}

\newtheorem{fact}[theorem]{Fact}

\theoremstyle{definition}
\newtheorem{definition}[theorem]{Definition}

\theoremstyle{remark}


\newlist{parts}{enumerate}{10}
\setlist[parts]{label=\arabic*.,ref=\arabic*}
\makeatletter
\if@cref@capitalise
	\crefname{partsi}{Part}{Parts}
\else
	\crefname{partsi}{part}{parts}
\fi
\makeatother
\Crefname{partsi}{Part}{Parts}

\setlist[itemize]{noitemsep}


\usetikzlibrary{shapes, patterns, decorations, fit, intersections}


\newcommand*{\Z}{{\mathbb{Z}}}
\let\R\relax
\newcommand*{\R}{{\mathbb{R}}}

\let\C\relax
\newcommand*{\C}{{\mathbb{C}}}
\newcommand*{\B}{{\mathcal{B}}}
\newcommand*{\1}{{\mathds{1}}}
\newcommand*{\SharpP}{{\ComplexityFont{\#P}}}
\newcommand*{\qbar}{\overline{q}}
\newcommand*{\sbar}{\overline{s}}

\let\poly\relax
\DeclareMathOperator{\poly}{poly}
\DeclareMathOperator{\per}{per}
\DeclareMathOperator{\bethe}{bethe}
\DeclareMathOperator{\bp}{bp}
\DeclareMathOperator{\argmax}{argmax}

\providecommand{\given}{}
\providecommand{\vs}{}
\DeclarePairedDelimiter{\bitcomplexity}\langle\rangle
\DeclarePairedDelimiterX{\card}[1]{\lvert}{\rvert}{\renewcommand\given{\nonscript\:\delimsize\vert\nonscript\:\mathopen{}}#1}
\DeclarePairedDelimiterX{\abs}[1]{\lvert}{\rvert}{\renewcommand\given{\nonscript\:\delimsize\vert\nonscript\:\mathopen{}}#1}
\DeclarePairedDelimiterX{\norm}[1]{\lVert}{\rVert}{\renewcommand\given{\nonscript\:\delimsize\vert\nonscript\:\mathopen{}}#1}
\DeclarePairedDelimiterX{\tuple}[1]{\lparen}{\rparen}{\renewcommand\given{\nonscript\:\delimsize\vert\nonscript\:\mathopen{}}#1}
\DeclarePairedDelimiterX{\parens}[1]{\lparen}{\rparen}{\renewcommand\given{\nonscript\:\delimsize\vert\nonscript\:\mathopen{}}#1}
\DeclarePairedDelimiterX{\brackets}[1]{\lbrack}{\rbrack}{\renewcommand\given{\nonscript\:\delimsize\vert\nonscript\:\mathopen{}}#1}
\DeclarePairedDelimiterX{\set}[1]\{\}{\renewcommand\given{\nonscript\:\delimsize\vert\nonscript\:\mathopen{}}#1}
\let\Pr\relax
\DeclarePairedDelimiterXPP{\Pr}[1]{\mathbb{P}}[]{}{\renewcommand\given{\nonscript\:\delimsize\vert\nonscript\:\mathopen{}}#1}
\DeclarePairedDelimiterXPP{\PrX}[2]{\mathbb{P}_{#1}}[]{}{\renewcommand\given{\nonscript\:\delimsize\vert\nonscript\:\mathopen{}}#2}
\DeclarePairedDelimiterXPP{\Ex}[1]{\mathbb{E}}[]{}{\renewcommand\given{\nonscript\:\delimsize\vert\nonscript\:\mathopen{}}#1}
\DeclarePairedDelimiterXPP{\ExX}[2]{\mathbb{E}_{#1}}[]{}{\renewcommand\given{\nonscript\:\delimsize\vert\nonscript\:\mathopen{}}#2}
\DeclarePairedDelimiterXPP{\KL}[1]{\mathcal{D}}(){}{\renewcommand\vs{\nonscript\:\delimsize\|\nonscript\:\mathopen{}}#1}

\newcommand*{\eval}[1]{\left.#1\right\rvert}


\title{A Tight Analysis of Bethe Approximation for Permanent}

\author{Nima Anari}
\affil{\small Stanford University, \textsf{anari@cs.stanford.edu}}

\author{Alireza Rezaei}
\affil{\small University of Washington, \textsf{arezaei@cs.washington.edu}}

\bibliography{refs}

\begin{document}
	\maketitle
	
	\begin{abstract}
		We prove that the permanent of nonnegative matrices can be deterministically approximated within a factor of $\sqrt{2}^n$ in polynomial time, improving upon previous deterministic approximations. We show this by proving that the Bethe approximation of the permanent, a quantity computable in polynomial time, is at least as large as the permanent divided by $\sqrt{2}^{n}$. This resolves a conjecture of \textcite{Gur11}. Our bound is tight, and when combined with previously known inequalities lower bounding the permanent, fully resolves the quality of Bethe approximation for permanent. As an additional corollary of our methods, we resolve a conjecture of \textcite{CY13}, proving that fractional belief propagation with fractional parameter $\gamma=-1/2$ yields an upper bound on the permanent.
	\end{abstract}
	
	\section{Introduction}\label{sec:intro}

The \emph{permanent} of a square matrix has been an important object of study in combinatorics and theoretical computer science. The permanent of an $n\times n$ matrix $A$ is defined as
	\[ \per(A)=\adjustlimits\sum_{\sigma\in S_n} \prod_{i=1}^n A_{i, \sigma(i)}, \]
where $S_n$ is the set of all permutations $\sigma:\set{1,\dots,n}\to \set{1,\dots,n}$. Although this definition of the permanent looks superficially similar to that of the determinant, the latter is polynomial-time computable while the former is \SharpP-hard to compute even when $A$ has entries in $\set{0,1}$ \cite{Val79}.

Given the hardness of computing the permanent exactly, a natural question is when can the permanent be approximated efficiently? For arbitrary $A\in \C^{n\times n}$, an $\epsilon \norm{A}^n$ additive approximation\footnote{We remark that for any matrix $A$, $\abs{\per(A)}\leq \norm{A}^n$.} was given by \textcite{Gur05}; this was later improved and derandomized in certain cases \cite{AH14, CCGP17}. The permanent of complex matrices is a central quantity in quantum linear optics, and is connected to some methods proposed for demonstrating ``quantum supremacy'' \cite{AA11}.

A \emph{multiplicative} approximation is elusive for general $A$; even determining the sign of the permanent of real matrices is hard \cite{AA11, GS16}. This motivates studying classes of matrices where the sign is not a barrier. Polynomial time multiplicative approximation algorithms have been designed for two large classes known to have a nonnegative permanent: the most prominent is a randomized $(1+\epsilon)$-approximation algorithm for matrices with nonnegative entries \cite{JSV04}, and there is also a $\simeq 4.84^n$-approximation algorithm for positive semidefinite matrices \cite{AGOS17}. We remark that there are cases where the permanent is not even necessarily real-valued, yet quasi-polynomial time multiplicative $(1+\epsilon)$-approximation has been achieved  \cite{Bar16, Bar18, EM17}. In this paper we focus only on matrices with nonnegative entries; for these, the permanent has a close connection to the combinatorics of perfect matchings in bipartite graphs.

For $A\in \set{0,1}^{n\times n}$, the permanent of $A$ is equal to the number of perfect matchings in the bipartite graph with bipartite adjacency matrix $A$. For nonnegative $A\in \R_{\geq 0}^{n\times n}$ the permanent is clearly a nonnegative quantity, and in fact can be thought of as a partition function. Consider sampling $\sigma \in S_n$ with probability $\Pr{\sigma}\propto \prod_{i=1}^n A_{i, \sigma(i)}$. Then $\per(A)$ is the normalizing constant in this model, i.e., $\Pr{\sigma}=\parens{\prod_{i=1}^n A_{i, \sigma(i)}}/\per(A)$. When $A\in \set{0,1}^{n\times n}$, this distribution over $\sigma\in S_n$ is the same as the uniform distribution over perfect matchings in the corresponding bipartite graph.

Given the close connection between random perfect matchings and the permanent of nonnegative matrices, \textcite{Bro86} suggested using Monte Carlo Markov Chain (MCMC) methods to approximate the permanent. This was finally achieved by \textcite{JSV04}, who gave a fully polynomial time randomized approximation scheme (FPRAS); that is a randomized algorithm that for any desired $\epsilon,\delta>0$, with probability $1-\delta$ computes a $1+\epsilon$ approximation to $\per(A)$ for nonnegative $A$ in time $\poly(n, \bitcomplexity{A}, 1/\epsilon,\log(1/\delta))$, where $\bitcomplexity{A}$ represents the bit complexity of $A$.

\subsection{Deterministic approximation of the permanent}

Despite having a complete understanding of approximations achievable by randomized algorithms, there is currently a large gap in our understanding of deterministic permanent approximation. The only known lower bound is still the \SharpP-hardness of exact computation \cite{Val79}. On the other hand, the best known deterministic approximation ratio is $\simeq 1.9022^n\leq 2^n$ \cite{GS14}. 

Deterministic approximations of the permanent can sometimes be useful in optimization tasks. For example, \textcite{AOSS17} used a particular approximation of the permanent to attack the problem of maximizing Nash social welfare; a key property used in their algorithm was that this approximation of $\per(A)$ was log-concave as a function of the matrix $A$, so it could be maximized for $A$ ranging over a convex set of matrices. Another problem that can be cast as permanent maximization is that of finding the profile maximum likelihood, a successful universal method for estimating symmetric statistics from sampled data \cite[see ][]{ADOS17}. One of the successful heuristics used to find the profile maximum likelihood estimate is based on a particular approximation of the permanent called the Bethe approximation \cite{Von14}. In fact this is the same approximation we study in this paper. For more recent results on profile maximum likelihood see \cite{ADOS17,PJW17,CSS18}.

It is no coincidence that the gap between randomized and deterministic approximations of permanent is so large: $1+\epsilon$ vs.\ exponential. This is because, roughly speaking, no approximation ratio can be in between the two extremes.\footnote{We remark that this does not rule out the possibility of the best approximation ratio being for example $2^{n/\poly\log(n)}$.} More precisely, any polynomial time $2^{n^c}$-approximation for any constant $c<1$ can be boosted to obtain a $(1+\epsilon)$-approximation. This is achieved by feeding the hypothetic algorithm the larger matrix
\[ I_{m}\otimes A=\begin{bmatrix}
	A & 0 & \dots & 0\\
	0 & A & \dots & 0\\
	\vdots & \vdots & \ddots & \vdots \\
	0 & 0 & \dots & A\\
\end{bmatrix},\]
whose size is $mn\times mn$ and whose permanent is $\per(A)^m$. By taking the $m$-th root of the hypothetic algorithm's approximation of $\per(I_m\otimes A)$, we obtain a $\sqrt[m]{2^{(mn)^c}}$-approximation to $\per(A)$. It is enough to take $m\simeq (n^c/\epsilon)^{1/(1-c)}=\poly(n, 1/\epsilon)$ for this approximation ratio to become $1+\epsilon$.

This suggests that the right question to ask about the approximation ratio is: What is the infimum of constants $\alpha>1$ for which the permanent can be deterministically approximated within a factor of $\alpha^n$ in polynomial time? \Textcite{LSW00} achieved the first nontrivial deterministic approximation of the permanent and showed that $\alpha\leq e$. Later \textcite{GS14} showed\footnote{The approximation ratio obtainable by the analysis of \textcite{GS14} is actually $\simeq 1.9022^n$ but this is not explicitly mentioned in their article, and the only explicit bound mentioned is $2^n$.} that $\alpha\leq 1.9022<2$. We prove that $\alpha\leq \sqrt{2}$.
\begin{theorem}\label{thm:main}
	There is a deterministic polynomial time algorithm that approximates the permanent of nonnegative matrices within a factor of $\sqrt{2}^n$.
\end{theorem}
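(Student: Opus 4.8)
The plan is to prove \Cref{thm:main} by establishing the key analytic inequality that the Bethe approximation of the permanent never underestimates $\per(A)$ by more than a factor of $\sqrt{2}^n$. Recall that the Bethe approximation $\bethe(A)$ is defined as the optimal value of a concave optimization problem over the Birkhoff polytope of doubly stochastic matrices: one maximizes $\prod_{i,j} A_{ij}^{P_{ij}} \cdot \prod_{i,j} (1-P_{ij})^{1-P_{ij}} / \prod_{i,j} P_{ij}^{P_{ij}}$ over doubly stochastic $P$. This quantity is computable (to any desired precision, which is all we need for an $\alpha^n$-approximation with $\alpha$ slightly above $\sqrt 2$, and in fact one can achieve exactly $\sqrt 2$ by a limiting/truncation argument) in deterministic polynomial time via convex programming, since the objective is log-concave. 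The classical result of Schrijver-type, together with the Gurvits--Samorodnitsky line of work, already gives the matching lower bound $\bethe(A) \le \per(A)$; so the whole content of the theorem reduces to proving the upper bound $\per(A) \le \sqrt{2}^n \cdot \bethe(A)$, which is the conjecture of \textcite{Gur11} stated in the abstract.

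First I would reduce to a clean extremal question. Since both $\per$ and $\bethe$ are continuous and behave predictably under scaling rows/columns (the ratio $\per(A)/\bethe(A)$ is invariant under multiplying a row or column by a positive scalar), I would use the Sinkhorn/scaling normalization to assume $A$ is itself doubly stochastic, and moreover that the Bethe-optimal doubly stochastic matrix $P$ equals $A$ itself — this is the standard fixed-point characterization of the Bethe optimum. Under this normalization the inequality becomes $\per(A) \le \sqrt 2^{\,n}\cdot \prod_{i,j}(1-A_{ij})^{1-A_{ij}} / \prod_{i,j} A_{ij}^{A_{ij}}$ wait — more precisely $\bethe(A) = \prod_{i,j} A_{ij}^{A_{ij}} (1-A_{ij})^{1-A_{ij}} \cdot (\text{something})$; I would pin down the exact normalized form and then the target is a statement purely about a doubly stochastic matrix and its permanent. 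The natural tool here is a tensorization/induction on $n$: expand $\per(A)$ along a row, $\per(A) = \sum_j A_{1j} \per(A^{(1,j)})$, and try to relate the Bethe functional of $A$ to a convex combination of Bethe functionals of the minors $A^{(1,j)}$, picking up a per-step factor that multiplies out to $\sqrt 2^{\,n}$.

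The main obstacle, I expect, is that the Bethe functional does not behave cleanly under taking minors — deleting a row and column of a doubly stochastic matrix destroys double-stochasticity, and the entropy-like terms $\prod (1-P_{ij})^{1-P_{ij}}$ do not factor. So the heart of the argument will be a carefully chosen local inequality: one wants to show that for the worst case the extremal matrix has a very rigid structure (e.g., entries taking only finitely many values, or a block/circulant structure), reducing the $n$-dimensional optimization to a one- or two-parameter calculus problem where the constant $\sqrt 2$ — equivalently $2^{1/2}$, arising as the $n$-th root of $2^{n/2}$ — emerges from optimizing something like $(1/2)^{1/2}(1/2)^{1/2}\cdot 2$ at a half-integer entry. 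I would look for a convexity/majorization argument (perhaps via the permanent's Van der Waerden-type lower bounds, or via a clever substitution reducing to the $2\times 2$ all-halves matrix $\tfrac12 J_2$ as the extremizer) to show the ratio is maximized at matrices built from $2\times 2$ blocks with all entries $1/2$, for which $\per = 1/2$ per block, $\bethe$ per block $=1/4\cdot(1/2)\cdot(\text{normalization}) $, and the ratio per block is exactly $\sqrt 2^{\,2}=2$. Establishing that this block configuration is genuinely the global extremizer — ruling out all other doubly stochastic matrices — is where the real work lies, and I would attack it by a smoothing/exchange argument on the entries combined with the fixed-point (stationarity) conditions of the Bethe variational problem. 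Once the inequality $\per(A)\le\sqrt2^{\,n}\bethe(A)$ is in hand, \Cref{thm:main} follows immediately: run the convex program to compute $\bethe(A)$ to sufficient precision, output it (scaled by a subexponential fudge factor absorbed into $\sqrt2^{\,n}$), and the two-sided bounds $\bethe(A)\le\per(A)\le\sqrt2^{\,n}\bethe(A)$ certify a $\sqrt2^{\,n}$-approximation.
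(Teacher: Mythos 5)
You correctly reduce the theorem to the upper bound $\per(A)\le\sqrt2^{\,n}\bethe(A)$ (the lower bound and polynomial-time computability being known), and you correctly guess the extremal configuration (block-diagonal $2\times2$ all-equal blocks). But the proposed route has two genuine problems. First, your normalization rests on a false premise: after Sinkhorn scaling, the Bethe-optimal $P$ is \emph{not} $A$ itself. The stationarity conditions of $\beta(A,P)$ over the Birkhoff polytope give $A_e\propto P_e(1-P_e)$ up to a rank-one (row/column) scaling, not $A_e\propto P_e$; indeed, for a symmetric doubly stochastic $2\times2$ matrix with diagonal entries $a\ne 1/2$ the objective is linear on $\B_2$ and the optimizer is a permutation matrix, not $A$. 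Moreover, the argmax of $\beta(A,\cdot)$ is invariant under row/column scaling of $A$, so no scaling can force the optimizer to coincide with $A$. The paper avoids the optimizer entirely and instead certifies the bound at a specific suboptimal point: $P_{i,j}=\PrX{\sigma\sim\mu}{(i,j)\in\sigma}$, the marginal matrix of the Gibbs distribution $\mu(\sigma)\propto\prod_{e\in\sigma}A_e$.

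Second, the heart of your argument --- proving that the $\tfrac12 J_2$-block configuration is the global extremizer via smoothing/exchange and induction on minors --- is exactly the step you leave open, and it is not clear it can be carried out as stated: as you note, minors of doubly stochastic matrices are not doubly stochastic and the entropy terms do not factor, and the Bethe objective has many non-extremal critical points (every pair $\{i,j\}$ gives a maximizer of the relevant simplex function), so naive convexity arguments fail. The paper's actual mechanism is quite different: it writes $\log\per(A)$ via $\KL{\mu\vs\nu_\pi}\ge 0$, where $\nu_\pi$ is a sequential sampling distribution driven by the marginals $P$ in a random row order $\pi$, averages over $\pi$ to decouple the orderings, and reduces to a single-row inequality $\Ex{\cdot}\le\log\sqrt2$ about a universal function $\phi$ on $\Delta_n$. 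That maximum is then pinned down by a reduction lemma collapsing $n\ge4$ to $n\le3$ plus an exact-rational-arithmetic computer verification for $n=3$. In short, your proposal identifies the right target and the right tight example, but the key inequality is neither proved nor approached by a method that is known to close.
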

We prove this result by improving the analysis of an already well-studied algorithm. The  factor of $\sqrt{2}^n$ for the output of this algorithm was already known to be achieved for a class of matrices, and was conjectured to be right approximation factor \cite{Gur11}.

\subsection{Bethe approximation of the permanent}

The algorithm we analyze has, in fact, been studied very extensively and its output is known as the Bethe approximation of the permanent, or Bethe permanent for short \cite{Von13}.
\begin{definition}\label{def:BP}
	For a matrix $A\in \R_{\geq 0}^{n\times n}$, define the Bethe permanent of $A$ as
	\[ \bethe(A)=\max \set*{\prod_{i,j=1}^n \parens*{A_{i,j}/P_{i, j}}^{P_{i,j}}\parens*{1-P_{i,j}}^{1-P_{i, j}} \given
		\begin{array}{l}
			P\in \R_{\geq 0}^{n\times n},\\
			P\1 = \1,\\
			\1^\intercal P =\1^\intercal.
		\end{array}}
	\]
\end{definition}
This maximization is defined over \emph{doubly stochastic} matrices $P$; that is nonnegative matrices whose rows and columns sum to $1$.

Bethe approximation is a heuristic used to approximate the partition function of arbitrary graphical models, whose roots go back to statistical physics. Bethe approximation is very closely connected to the sum-product algorithm, also known as belief propagation \cite{YFW05, Von13}.

Natural graphical models can be easily constructed where the permanent is the partition function and the corresponding Bethe approximation, the Bethe permanent. It was observed that the sum-product algorithm on these graphical models yields a good approximation to the permanent in practice \cite{CKV08, HJ09}. Later \textcite{Von13} showed that for these graphical models the sum-product algorithm always converges to the maximizer in \cref{def:BP}, and that the maximization objective in \cref{def:BP} becomes a concave function of $P$ after taking $\log$. So besides the sum-product algorithm, $\bethe(A)$ can also be computed by convex programming methods. We remark that this is one of the rare instances where the sum-product algorithm is \emph{guaranteed} to converge.

The first theoretical guarantee for the quality of Bethe approximation was actually obtained by \textcite{GS14}. Their $2^n$-approximation guarantee was for a slightly modified variant of the Bethe permanent. In particular, \textcite{Gur11} showed that $\bethe(A)$ always lower bounds $\per(A)$ and later \textcite{GS14} showed that $2^n \bethe(A)$ upper bounds $\per(A)$.  The lower bound was proved by using an inequality of \textcite{Sch98}. Two alternative proofs of the lower bound have since been obtained based on 2-lifts \cite{Csi14}, and real stable polynomials \cite{AO17}. For a survey of these proofs and their generalizations beyond permanent see \textcite{SV17}.
\begin{theorem}[\cite{Gur11} based on \cite{Sch98}]\label{thm:lowerbound}
	For any matrix $A\in \R_{\geq 0}^{n\times n}$ we have
	\[ \per(A)\geq \bethe(A). \]
\end{theorem}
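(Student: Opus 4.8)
The target inequality says that for every doubly stochastic matrix $P$,
\[ \per(A)\;\ge\;\prod_{i,j=1}^n (A_{i,j}/P_{i,j})^{P_{i,j}}(1-P_{i,j})^{1-P_{i,j}}, \]
so it suffices to fix an arbitrary doubly stochastic $P$ and produce this lower bound on $\per(A)$. The plan is to use Schrijver's inequality, which for a bipartite graph lower bounds the number of perfect matchings in terms of the degrees; in its weighted/analytic form (due to Schrijver, as used by Gurvits) it states that for any doubly stochastic $P$,
\[ \per_P\!\big(\text{the ``pattern'' of }P\big)\;\ge\;\prod_{i,j}(1-P_{i,j})^{1-P_{i,j}}, \]
or more precisely that a certain permanent-like quantity built from $P$ is bounded below by $\prod_{i,j}(1-P_{i,j})^{1-P_{i,j}}$. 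The first step is therefore to recall the exact statement of Schrijver's inequality in the form needed and set up notation matching Definition~\ref{def:BP}.

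The second step is a standard scaling/normalization argument. Given $A$ with $A_{i,j}>0$ wherever $P_{i,j}>0$ (the only case that matters, since a zero $A_{i,j}$ with positive $P_{i,j}$ makes the right-hand side vanish), define the rescaled matrix $B_{i,j}=A_{i,j}/P_{i,j}$ on the support of $P$. The key algebraic identity to verify is that
\[ \per(A)\;=\;\per(A)\quad\text{while}\quad \prod_{i,j}(A_{i,j}/P_{i,j})^{P_{i,j}} \]
is exactly the factor relating $\per(A)$ to a $P$-weighted permanent: concretely, one writes $\per(A)=\sum_\sigma\prod_i A_{i,\sigma(i)}=\sum_\sigma\prod_i B_{i,\sigma(i)}P_{i,\sigma(i)}$ and lower-bounds each term $\prod_i B_{i,\sigma(i)}$ by the \emph{geometric mean} $\prod_{i,j}B_{i,j}^{P_{i,j}}$ after suitably accounting for which permutations are in the support — this is where the doubly stochastic structure of $P$ and a convexity (AM–GM or Jensen) argument enter. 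Then $\per(A)\ge \big(\prod_{i,j}B_{i,j}^{P_{i,j}}\big)\cdot(\text{number-of-matchings factor})$, and the remaining combinatorial factor is precisely what Schrijver's inequality bounds below by $\prod_{i,j}(1-P_{i,j})^{1-P_{i,j}}$. Combining the two gives the claimed bound.

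The main obstacle is getting the reduction to Schrijver's inequality exactly right: Schrijver's bound is naturally stated for $0/1$ matrices (bipartite graphs) and counts perfect matchings, whereas here $P$ is a real doubly stochastic matrix, so one needs the ``fractional'' analytic strengthening of Schrijver's inequality (this is the content of~\cite{Sch98} as extracted by Gurvits). The delicate point is that the factor $\prod_{i,j}B_{i,j}^{P_{i,j}}$ that one pulls out of $\per(A)$ must be shown not to ``over-count'' — i.e., the convexity step must be applied to the right probability distribution on permutations, namely the one implicitly defined by $P$ via Birkhoff's theorem writing $P$ as a convex combination of permutation matrices, $P=\sum_k \lambda_k \Pi_k$. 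With that decomposition in hand, $\prod_{i,j}B_{i,j}^{P_{i,j}}=\prod_k\big(\prod_i B_{i,\Pi_k(i)}\big)^{\lambda_k}\le \sum_k\lambda_k\prod_i B_{i,\Pi_k(i)}\le \per(B\odot\mathbf{1}_{\operatorname{supp}P})$ by weighted AM–GM, and one then has to connect $\per(B\odot\mathbf 1_{\operatorname{supp}P})$ back to $\per(A)$ and to the Schrijver factor. Handling the support carefully (entries where $P_{i,j}=0$, and the boundary cases $P_{i,j}\in\{0,1\}$ where $(1-P_{i,j})^{1-P_{i,j}}$ should be read as $1$) is the one place where the argument needs genuine care rather than routine manipulation.
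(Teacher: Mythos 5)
First, a framing point: the paper does not actually prove \cref{thm:lowerbound}. It imports the statement from \textcite{Gur11}, whose proof rests on Schrijver's permanental inequality \cite{Sch98}, and merely points to alternative proofs via $2$-lifts and real stable polynomials. So your attempt has to be measured against that cited argument, and as written it has a genuine gap precisely at the step that makes Gurvits's derivation nontrivial.

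The gap is the combination step. Schrijver's inequality, in the analytic form Gurvits uses, says that for doubly stochastic $P$ one has $\per\bigl((P_{ij}(1-P_{ij}))_{i,j}\bigr)\ge\prod_{i,j}(1-P_{ij})$; note that the exponent $1-P_{ij}$ appearing in $\bethe(A)$ is \emph{not} present there, and manufacturing it is the entire content of Gurvits's derivation, not a bookkeeping step. Your plan is to write $\per(A)=\sum_\sigma\prod_i B_{i,\sigma(i)}P_{i,\sigma(i)}$ with $B_{ij}=A_{ij}/P_{ij}$ and then assert $\per(A)\ge\bigl(\prod_{i,j}B_{ij}^{P_{ij}}\bigr)\cdot(\text{a combinatorial factor controlled by Schrijver})$. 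But the permanent of a Hadamard product $B\circ P$ does not factor as a $P$-weighted geometric mean of $B$ times a permanental functional of $P$ alone, and the only concrete computation you supply --- the Birkhoff decomposition plus weighted AM--GM --- establishes $\prod_{i,j}B_{ij}^{P_{ij}}\le\per(B')$ for $B'$ the restriction of $B$ to the support of $P$. That is an \emph{upper} bound on the geometric-mean factor by the permanent of the wrong matrix ($B$, not $A=B\circ P$); it cannot be chained with any lower bound on a quantity built from $P$ to produce a lower bound on $\per(A)$. (The vacuous displayed identity ``$\per(A)=\per(A)$'' and the unspecified ``number-of-matchings factor'' are symptoms of the same missing mechanism.) To close the argument you would need either Gurvits's actual route --- which exploits the invariance of both sides of the target inequality under row and column scaling of $A$ and runs Schrijver's bound through a capacity-type argument for the product polynomial $\prod_i\sum_j A_{ij}x_j$ --- or one of the alternative proofs the paper cites \cite{Csi14,AO17}.
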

While the lower bound is an elegant inequality that is also tight \cite{GS14}, the upper bound of $2^n\bethe(A)$ had room for improvement. In fact \textcite{Gur11} had already conjectured that the correct constant for the upper bound should be $\sqrt{2}^n$. We resolve this conjecture.
\begin{theorem}\label{thm:upperbound}
	For any matrix $A\in \R_{\geq 0}^{n\times n}$ we have
	\[ \per(A)\leq \sqrt{2}^n \bethe(A). \]
\end{theorem}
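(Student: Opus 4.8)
The plan is to reduce \Cref{thm:upperbound} to an explicit inequality about permanents of doubly stochastic matrices, and then to prove that inequality by decomposing along cycles.

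\emph{Reduction to a combinatorial inequality.} If $\per(A)=0$ there is nothing to prove, so assume $\per(A)>0$ and, by Sinkhorn scaling together with a routine continuity argument, write $A=D_1BD_2$ with $D_1,D_2$ positive diagonal and $B$ doubly stochastic. Both sides transform identically: $\per(D_1BD_2)=\det(D_1)\det(D_2)\per(B)$, while substituting $A_{ij}=d_iB_{ij}e_j$ into \Cref{def:BP} factors out $\prod_i d_i^{\sum_j P_{ij}}\prod_j e_j^{\sum_i P_{ij}}=\det(D_1)\det(D_2)$ since $P$ is doubly stochastic. So we may assume $A$ is doubly stochastic, and then $P=A$ is itself feasible in \Cref{def:BP}; evaluating the objective there (with the usual convention that factors with $P_{ij}=0$ equal $1$) gives $\bethe(A)\ge\prod_{i,j}(1-A_{ij})^{1-A_{ij}}$. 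It therefore suffices to prove, for every doubly stochastic $A\in\R_{\ge 0}^{n\times n}$,
\begin{equation}
  \per(A)\ \le\ \sqrt2^{\,n}\prod_{i,j}\parens*{1-A_{ij}}^{1-A_{ij}},\tag{$\star$}
\end{equation}
which should be tight precisely when $A$ is, up to permuting rows and columns, block diagonal with every block equal to $\tfrac12\begin{psmallmatrix}1&1\\1&1\end{psmallmatrix}$ --- the family on which $\sqrt2^{\,n}$ was already known to be attained. (Equivalently, via the Gibbs measure $\mu(\sigma)\propto\prod_i A_{i\sigma(i)}$ and its matrix of one-step marginals $Q$, the identity $\log\per(A)=H(\mu)+\sum_{ij}Q_{ij}\log A_{ij}$ and feasibility of $P=Q$ show it suffices to establish $H(\mu)\le -\sum_{ij}Q_{ij}\log Q_{ij}+\sum_{ij}(1-Q_{ij})\log(1-Q_{ij})+\tfrac n2\log 2$, a quantitative strengthening of the subadditivity of Shannon entropy for distributions over permutations; this is the shape that matches the fractional free-energy viewpoint of \textcite{CY13}, and it too reduces to $(\star)$ --- apply the data-processing inequality to the event ``the sample is a permutation''.)

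\emph{Proving $(\star)$ by cycle decomposition.} Regard $(\star)$ as the assertion $\Phi(A):=\log\per(A)-\sum_{ij}(1-A_{ij})\log(1-A_{ij})\le\tfrac n2\log 2$ on the Birkhoff polytope. The first step is to reduce to the case where the support of $A$ --- the bipartite graph of its nonzero entries --- has maximum degree at most $2$, i.e.\ is a disjoint union of single edges (forced $1$-entries, which contribute trivially) and even cycles. Then $(\star)$ factorizes over the cycles; since the entries around a cycle of a doubly stochastic matrix must alternate between some $a$ and $1-a$, a cycle meeting $\ell$ rows contributes $a^\ell+(1-a)^\ell$ to $\per(A)$ and $\bigl((1-a)^{1-a}a^{a}\bigr)^{\ell}$ to $\prod_{ij}(1-A_{ij})^{1-A_{ij}}$, so $(\star)$ reduces to $\bigl(a^\ell+(1-a)^\ell\bigr)^{1/\ell}\le\sqrt2\,(1-a)^{1-a}a^{a}$ for $0\le a\le 1$ and integer $\ell\ge 2$. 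Since $\bigl(a^\ell+(1-a)^\ell\bigr)^{1/\ell}$ is nonincreasing in $\ell$ by the power mean inequality, it is enough to treat $\ell=2$, i.e.\ $a^2+(1-a)^2\le 2\,(1-a)^{2(1-a)}a^{2a}$ on $[0,1]$ --- a single-variable inequality, tight at $a=\tfrac12$, provable by elementary calculus.

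\emph{The main obstacle} is the reduction of $(\star)$ to cyclic support, because $\Phi$ is not globally concave --- its maximum lies in the relative interior of a face of the Birkhoff polytope, not at a vertex. The natural approach is to take a maximizer $A^\star$ of $\Phi$ and perturb along alternating cycles of its support: the term $-\sum_{ij}(1-A_{ij})\log(1-A_{ij})$ is \emph{concave} along such a direction $R$ (its second derivative is $-\sum_{ij}R_{ij}^2/(1-A_{ij})$), which one hopes is enough to rule out a support vertex of degree $\ge 3$ via first- and second-order optimality. Making this precise --- or replacing it by a direct argument, e.g.\ one organized around the loop series of the Bethe approximation, whose terms for the permanent again carry cyclic combinatorics --- is the heart of the proof.
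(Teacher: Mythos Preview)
Your proposal has a genuine and fatal gap: the reduction to cyclic support is false.

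You aim to prove $(\star)$: for doubly stochastic $A$, $\per(A)\le\sqrt2^{\,n}\prod_{ij}(1-A_{ij})^{1-A_{ij}}$, equivalently $\Phi(A):=\log\per(A)-\sum_{ij}(1-A_{ij})\log(1-A_{ij})\le\tfrac n2\log2$. You then claim the maximum of $\Phi$ over the Birkhoff polytope is attained on matrices whose support has maximum degree at most $2$. This claim fails already for $n=3$. Take $A=J_3/3$ (all entries $1/3$): then $\per(A)=2/9$ and $\sum_{ij}(1-A_{ij})\log(1-A_{ij})=6\log(2/3)$, so $\Phi(J_3/3)=-5\log2+4\log3\approx0.929$. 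On the other hand, every $3\times3$ doubly stochastic matrix with support of maximum degree $\le2$ is, up to a trivial $1\times1$ summand, a single $6$-cycle with alternating entries $a,1-a$, for which $\Phi=\log\bigl(a^3+(1-a)^3\bigr)+3H(a)$; one checks this function of $a\in[0,1]$ never exceeds $\approx0.861$ (its maximum is near $a\approx0.23$, and the value at $a=1/2$ is only $\log2\approx0.693$). Thus $J_3/3$ strictly beats every matrix with degree-$2$ support, and the heuristic you sketch---perturbing a putative maximizer along an alternating cycle, using concavity of $-\sum(1-A_{ij})\log(1-A_{ij})$---cannot force the maximizer onto such support. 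You yourself flag this reduction as ``the heart of the proof''; unfortunately the heart does not beat.

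There is also a strategic issue. Your $(\star)$, with $P$ equal to the Sinkhorn-scaled matrix, is \emph{strictly stronger} than \cref{thm:upperbound}: for doubly stochastic $A$ one only has $\bethe(A)\ge\prod_{ij}(1-A_{ij})^{1-A_{ij}}$, not equality, so $(\star)$ is not implied by the theorem. The paper deliberately sidesteps $(\star)$. Its key idea is to evaluate $\beta(A,\cdot)$ at a \emph{different} doubly stochastic matrix, namely the marginal matrix $P_{ij}=\PrX{\sigma\sim\mu}{\sigma(i)=j}$ of the Gibbs measure $\mu$. It compares $\mu$ to a greedy distribution $\nu_\pi$ (process rows in order $\pi$; at row $i$ pick an unused column $j$ with probability $\propto P_{i,j}$) via $\KL{\mu\vs\nu_\pi}\ge0$, averages over a uniformly random order $\pi$, and reduces everything to a \emph{one-row} inequality on a stochastic vector $p\in\Delta_n$:
\[
\sum_{k} p_k\log(p_1+\dots+p_k)+\sum_{k} p_k\log(p_k+\dots+p_n)-2\sum_{k}(1-p_k)\log(1-p_k)\ \le\ \log 2.
\]
That scalar inequality is then proved by a merging lemma (adjacent coordinates with small sum can be collapsed without decreasing the left side), reducing to $n\le3$, which is finished by a finite exact-arithmetic computer verification. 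The choice of the marginal $P$ rather than the Sinkhorn $P$ is precisely what makes the row-by-row decoupling go through.
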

\Cref{thm:upperbound,thm:lowerbound} together with polynomial time computability \cite{Von13, GS14} imply \cref{thm:main}.

\subsection{Fractional belief propagation}

As an additional corollary of our methods, we prove an upper bound on the permanent conjectured by \textcite{CY13}. They proposed a way to interpolate between the mean-field and the Bethe approximations of the permanent, which result in, respectively, upper and lower bounds for the permanent.
\begin{definition}[\cite{CY13}]\label{def:bp}
	For a matrix $A\in \R_{\geq 0}^{n\times n}$ and parameter $\gamma\in \R$, define the $\gamma$-fractional belief propagatation approximation for the permanent of $A$, which we denote by $\bp_\gamma(A)$, as
	\[ \bp_\gamma(A)=\max \set*{\prod_{i,j=1}^n (A_{i,j}/P_{i,j})^{P_{i,j}}(1/(1-P_{i,j}))^{\gamma (1-P_{i,j})} \given \begin{array}{l}
			P\in \R_{\geq 0}^{n\times n},\\
			P\1 = \1,\\
			\1^\intercal P =\1^\intercal.
		\end{array}} \]
\end{definition}
Note that for $\gamma=-1$, we get $\bp_\gamma=\bethe$. Furthermore, for $\gamma=1$, we obtain what is classically known as the mean-field approximation. The former is a lower bound on the permanent, and the latter is an upper bound. In fact, for any $\gamma\geq 0$, we have $\bp_\gamma(A)\geq \per(A)$ \cite{CY13}. In both cases, the objective in \cref{def:bp} is log-concave as a function of $P$. \Textcite{CY13} observed that for any $\gamma\in[-1,1]$, the objective in \cref{def:bp} is also log-concave in terms of $P$, because it is a geometric mean of the $\gamma=-1$ and $\gamma=1$ cases, and thus $\bp_\gamma(A)$ can be computed efficiently for any $\gamma\in [-1,1]$.

\Textcite{CY13} also observed that $\bp_\gamma$ is monotonically increasing in terms of $\gamma$, and thus one can study the parameter $\gamma^*(A)$ for which $\bp_{\gamma*(A)}(A)=\per(A)$; to resolve corner cases where an interval of such parameters exists, consider the least parameter $\gamma^*$ for which this happens. They showed, based on previously known inequalities, that $\gamma^*(A)\leq 0$, but they experimentally observed that $\gamma^*(A)\leq -1/2$, and conjectured this to be always the case. We prove this conjecture.
\begin{theorem}\label{thm:bp}
	For any matrix $A\in \R_{\geq 0}^{n\times n}$ we have
	\[ \per(A)\leq \bp_{-1/2}(A). \]
\end{theorem}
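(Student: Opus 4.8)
The plan is to reduce \cref{thm:bp} to a single entropy inequality for random permutations and then prove that inequality by induction, following the strategy behind the proof of \cref{thm:upperbound}. Fix a nonnegative $A$; we may assume $\per(A)>0$, else there is nothing to prove. Let $\mu$ be the Gibbs distribution on $S_n$ with $\mu(\sigma)\propto\prod_{i}A_{i,\sigma(i)}$, and let $P$ be its matrix of marginals, so $P_{ij}$ is the probability that $\sigma(i)=j$ when $\sigma\sim\mu$; note $P$ is doubly stochastic. Taking $P$ as a (generally suboptimal) feasible point in \cref{def:bp} and using $(1/(1-P_{ij}))^{-\frac12(1-P_{ij})}=(1-P_{ij})^{\frac12(1-P_{ij})}$, then taking logarithms, gives
\[
\ln\bp_{-1/2}(A)\ \ge\ \sum_{i,j}P_{ij}\ln A_{ij}\ -\ \sum_{i,j}P_{ij}\ln P_{ij}\ +\ \tfrac12\sum_{i,j}(1-P_{ij})\ln(1-P_{ij}),
\]
whereas $\mu(\sigma)=\prod_iA_{i,\sigma(i)}/\per(A)$ directly yields $\ln\per(A)=\sum_{i,j}P_{ij}\ln A_{ij}+H(\mu)$, where $H$ is Shannon entropy in nats. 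Subtracting, \cref{thm:bp} follows once we establish
\begin{equation}\tag{$\star$}\label{eq:star}
	H(\mu)\ \le\ -\sum_{i,j}P_{ij}\ln P_{ij}\ +\ \tfrac12\sum_{i,j}(1-P_{ij})\ln(1-P_{ij}).
\end{equation}
(This reduction is symmetric in spirit: the analogue of \eqref{eq:star} with the coefficient $\tfrac12$ replaced by $1$ and an added $n\ln\sqrt2$ on the right would, by the same argument, yield \cref{thm:upperbound}.)

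Writing $H(\sigma(i))=-\sum_{j}P_{ij}\ln P_{ij}$, inequality \eqref{eq:star} is equivalent to a lower bound on the total correlation of the coordinates of $\sigma$, namely $\sum_{i}H(\sigma(i))-H(\mu)\ \ge\ \tfrac12\sum_{i,j}(1-P_{ij})\ln\tfrac{1}{1-P_{ij}}$, i.e.\ to the assertion that the coordinates of a random permutation are always ``sufficiently correlated.'' I would prove \eqref{eq:star} by induction on $n$ via the entropy chain rule, and in fact for an arbitrary distribution on $S_n$, since conditioning any distribution on $S_n$ on the value of one coordinate yields a distribution on $S_{n-1}$ (and preserves the Gibbs property when present). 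Reveal $\sigma(n)$: then $H(\mu)=H(\sigma(n))+\sum_{j}P_{nj}H(\mu_j)$, where $\mu_j$ is the law of $\sigma$ given $\sigma(n)=j$, viewed as a distribution on the $(n-1)\times(n-1)$ subproblem on rows $[n-1]$ and columns $[n]\setminus\{j\}$, with marginal matrix $Q^{(j)}$. The summand $H(\sigma(n))$ exactly matches the row-$n$ contribution to $-\sum_{i,j}P_{ij}\ln P_{ij}$; and since $P_{i\ell}=\sum_{j}P_{nj}Q^{(j)}_{i\ell}$ and $t\mapsto t\ln t$ is convex, averaging the inductive ``$-\sum Q\ln Q$'' bounds over $j$ only decreases them, so the entropy part of the induction closes with room to spare.

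The ``$\tfrac12\sum(1-P)\ln(1-P)$'' correction is the step I expect to be the main obstacle. Convexity of $(1-t)\ln(1-t)$ now cuts the wrong way: the $j$-average of the inductive correction terms \emph{under}-estimates the target $\tfrac12\sum_{i<n,\ell}(1-P_{i\ell})\ln(1-P_{i\ell})$, and moreover one must absorb the leftover (nonpositive) term $\tfrac12\sum_{\ell}(1-P_{n\ell})\ln(1-P_{n\ell})$. So the slack produced by concavity in the entropy part has to be reallocated to cover both deficits, with essentially no room for waste: \eqref{eq:star} is an equality, e.g.\ for $\mu$ uniform on two permutations differing by a single transposition (and for products of such), which are exactly the configurations witnessing tightness of \cref{thm:upperbound}. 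To make the induction close I would most likely average over a uniformly random order in which the coordinates are revealed --- as in Radhakrishnan's entropy proof of Br\'egman's theorem --- rather than peel off a fixed coordinate, or equivalently carry a suitable potential through the recursion; the crux is to pin down the right local inequality, a low-dimensional estimate in the conditional marginals $Q^{(j)}$ that forces precisely the constant $\tfrac12$. Once \eqref{eq:star} is established, \cref{thm:bp} is immediate.
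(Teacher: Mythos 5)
Your reduction is sound: taking the marginal matrix $P$ of the Gibbs distribution $\mu$ as the feasible point in \cref{def:bp} and using $\log\per(A)=\sum_{i,j}P_{ij}\log A_{ij}+H(\mu)$ correctly turns \cref{thm:bp} into your inequality $(\star)$, which is exactly the statement the paper needs as well. You have also correctly diagnosed why the fixed-coordinate chain-rule induction fails: the convexity of $(1-t)\log(1-t)$ points the wrong way when you average the inductive correction terms over the revealed value of $\sigma(n)$, and the row-$n$ correction term has nowhere to go. But at that point the proposal stops: ``the crux is to pin down the right local inequality'' is precisely the part that constitutes the proof, and it is not pinned down. As written, this is a genuine gap, not a routine detail.

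For the record, the missing content is supplied in the paper as follows. Instead of conditioning, one compares $\mu$ against an explicit sequential distribution $\nu_\pi$ that reveals the rows in order $\pi$ and, at each step, picks a still-unmatched column $j$ with probability proportional to $P_{i,j}$ (crucially using the marginals $P$, not the entries of $A$). Gibbs' inequality gives $H(\mu)\le -\ExX{\sigma\sim\mu}{\log\nu_\pi(\sigma)}$, and averaging over a \emph{uniformly random} $\pi$ --- your Radhakrishnan-style guess is the right instinct --- makes the bound decouple into one inequality per row of $P$. After pairing each $\pi$ with its reversal, $(\star)$ reduces to the single-row analytic statement that for any $(p_1,\dots,p_n)\in\Delta_n$,
\[ \sum_{k}(1-p_k)\log(1-p_k)\;\ge\;\sum_{k}p_k\log(p_1+\dots+p_k)+\sum_{k}p_k\log(p_k+\dots+p_n), \]
which is \cref{lem:bprem}; it is proved by an Abel-summation rewrite of each side and a term-by-term convexity argument in the partial sums. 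That lemma is where the constant $\tfrac12$ is actually forced (it is tight for the two-point configurations you mention), and without it, or an equivalent local inequality, $(\star)$ remains unproved.
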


We remark that the permanent upper bounds in \cref{thm:bp,thm:upperbound} are, in general, incomparable. However, the ratio between $\bp_{-1/2}(A)$ and $\bp_{-1}(A)=\bethe(A)$ is at most $\sqrt{e}^n$. Thus $\bp_{-1/2}(A)$, which is efficiently computable, provides a $\sqrt{e}^n$ deterministic approximation to the permanent. See \cref{sec:bp} for more details. For comparison, $1.9022>\sqrt{e}\simeq 1.6487>\sqrt{2}$.

\subsection{Techniques for upper bounding permanent}

Perhaps the most well-known upper bound on the permanent is the Bregman-Minc inequality, conjectured by \textcite{Min63} and proved by \textcite{Bre73}, which states that for $A\in \set{0,1}^{n\times n}$ if the row sums are $d_1,\dots, d_n$, then
\[ \per(A)\leq \sqrt[d_1]{d_1!}\cdots \sqrt[d_n]{d_n!}. \]
This bound has been reproved by \textcite{Sch78, Rad97} using entropy-based methods. In these works, some generalizations have been suggested for non-binary matrices $A\in \R_{\geq 0}^{n\times n}$, but as far as we know none imply \cref{thm:upperbound}. We remark that entropy-based bounds have recently found applications in counting bases of matroids and matroid intersections \cite{AOV18}; perfect matchings are a special case of matroid intersections.

Our method for proving \cref{thm:upperbound} is inspired by the entropy-based proofs of the Bregman-Minc inequality, but differs from them in key places. To illustrate, prior entropy-based proofs of Bregman-Minc proceeded by comparing, in terms of entropy-based measures, two distributions on perfect matchings in the bipartite graph associated with $A$. One was the uniform distribution $\mu$ over perfect matchings, and the other one $\nu$ can be described by the following sampling procedure: Iterate over vertices on one side, and each time pick a neighbor for the current vertex uniformly at random out of all currently unmatched neighbors. We remark that the latter distribution has been used to estimate the permanent by importance sampling, and an entropy-based measure of distance between $\mu$ and $\nu$ has been shown to have a close connection to the sample complexity \cite{Dia18, CD18}.

The generalization of $\mu$ to arbitrary $A\in \R_{\geq 0}^{n\times n}$ is straightforward. We just let $\mu(\sigma)\propto \prod_{i=1}^n A_{i,\sigma(i)}$ for $\sigma\in S_n$. Similarly there is a natural generalization of $\nu$: In iteration $i$ we just sample an unmatched neighbor $j$ with probability proportional to $A_{i,j}$. Indeed this generalization has been studied by \textcite{Sch78} in order to obtain generalizations of Bregman-Minc to non-binary matrices.

Our main insight is that defining $\nu$ according to another matrix $P\in \R_{\geq 0}^{n\times n}$ can result in a distribution that is closer to $\mu$. To be more precise, we find a matrix $P$ other than $A$, and modify the sampling procedure defining $\nu$ to pick a neighbor $j$ for $i$ with probability $\propto P_{i, j}$. We will ultimately let $P$ be the marginals of $\mu$, i.e., $P_{i, j}=\PrX{\sigma\sim \mu}{\sigma(i)=j}$. One feature of this modification for example is that the modified $\nu$ never picks any edge that can never be part of a sample of $\mu$. Comparing $\mu$ and the modified $\nu$ ultimately allows us to prove $\cref{thm:upperbound}$.

Our technique for upper bounding the permanent is completely analytical and is described in \cref{sec:upperbound}. However the approximation ratio guarantee resulting from it is only described in terms of the maximum of a well-defined universal function on the standard simplex. The maxima of this function can be easily guessed; but in order to prove that they are indeed the maxima, further ideas are needed. We combine several analytic ideas to reduce the possible set of maxima to a finite set (of moderate size) and with the help of a computer search formally prove that the guessed maxima of this function are indeed the global maxima, yielding the desired approximation ratio of $\sqrt{2}^n$. We remark that even though we used a computer search, we analytically show that the search can be done over a particular finite set of points and with exact rational arithmetic. So the computer program's successful execution should be considered a formal \emph{certificate} of correctness. For details of the computer search and the proof that the search can be reduced to a finite set see \cref{sec:phi}.

As a corollary of our upper bound technique, we prove \cref{thm:bp}, resolving a conjecture of \textcite{CY13}. Unlike the proof of our main result, where we use a computer search, the proof of \cref{thm:bp} involves simple analytic inequalities in addition to what is already proven in \cref{sec:upperbound}. The details are in \cref{sec:bp}. Readers interested in a proof of our main results \cref{thm:main,thm:upperbound}, can safely ignore this corollary and the entirety of \cref{sec:bp}.

\subsection{Structure of the paper}
In \cref{sec:prelim,sec:bethe}, we provide preliminaries on the Bethe approximation and recite examples which show the necessity of the approximation factor $\sqrt{2}^n$. \Cref{sec:upperbound} is the main technical section, where we derive an upper bound on the permanent based on an entropy-based measure of distance between distributions on perfect matchings. In \cref{sec:phi}, we massage this upper bound to prove our main result, \cref{thm:upperbound}. In \cref{sec:bp}, we massage the same upper bound to prove an alternative upper bound on the permanent, \cref{thm:bp}. \Cref{sec:phi,sec:bp} are independent of each other, and can be read while safely skipping the other.

	\section{Notation and preliminaries}\label{sec:prelim}

Everywhere $\log$ is taken in base $e$. We use $[n]$ to denote the set $\set{1,\dots,n}$. For a set $S$ we denote the family of all subsets of $S$ by $2^S$. We use the notation $\1$ to denote the vector of all ones in $\R^n$. For a set $S\subseteq [n]$, we use $\1_S$ to denote the indicator vector of $S$ which belongs to $\set{0, 1}^n$. We use $\Delta_n$ to denote the standard simplex\footnote{We remark that in geometry this set is usually denoted $\Delta_{n-1}$ but we deviate from the standard slightly and use $\Delta_n$.} of all $n$-dimensional stochastic vectors, i.e., all $p\in \R_{\geq 0}^n$ such that $\1^\intercal p=1$:
\[ \Delta_n=\set{(p_1,\dots,p_n)\in \R_{\geq 0}^n\given p_1+\dots+p_n=1}. \]

We call a matrix $P\in \R_{\geq 0}^{n\times n}$ doubly stochastic when each row and each column of $P$ belongs to $\Delta_n$, i.e., $P\1=\1$ and $\1^\intercal P=\1^\intercal$. The set of all doubly stochastic matrices is known as the Birkhoff polytope which we denote by $\B_n$:
\[ \B_n = \set{P\in \R_{\geq 0}^{n\times n}\given P\1=\1 \wedge \1^\intercal P=\1^\intercal}. \]
The fact that $\B_n$ is a polytope can be easily seen by noticing that it is defined by $2n$ linear equalities and $n^2$ linear inequalities. An alternative description of $\B_n$ is as the convex hull of permutation matrices \cite[see, e.g.,][]{Zie12}.

For a pair $e=(i, j)\in [n]\times [n]$, we use $A_e=A_{i, j}$ to denote the entry in the $i$-th row and $j$-th column of $A$. We use $S_n$ to denote the set of all permutations $\sigma:[n]\to [n]$. With some abuse of notation we also view $\sigma$ as the set $\set{(i, \sigma(i))\given i\in [n]}\subseteq [n]\times [n]$. So in particular $e\in \sigma$ means that $e$ is a pair of the form $(i, \sigma(i))$ for some $i\in [n]$. With this notation the permanent can be written as
\[ \per(A)=\adjustlimits\sum_{\sigma\in S_n} \prod_{e\in \sigma} A_e. \]

We sometimes think of $\sigma\in S_n$ as a perfect matching in a bipartite graph: The first side is the domain of $\sigma$, and the other side the image of $\sigma$. For two permutations $\sigma, \pi\in S_n$ we take $\sigma\circ \pi$ to be the composed permutation defined by $\sigma\circ\pi(i)=\sigma(\pi(i))$. We also sometimes view a permutation $\pi\in S_n$ as a total ordering on $\set{1,\dots,n}$, that is we define the ordering $<_\pi$ on $[n]$ by $\pi(1)<_\pi \pi(2)<_\pi\dots<_\pi\pi(n)$. For a permutation $\pi$, we let $\overline{\pi}$ denote the reverted permutation defined by $\overline{\pi}(i)=\pi(n-i+1)$. The total ordering defined by $\pi$ and $\overline{\pi}$ are the opposites of each other.

For two distributions $\mu$ and $\nu$ supported on a finite set $\Omega$, we define their Kullback-Leibler divergence, KL divergence for short, as
\[ \KL{\mu\vs \nu}=\sum_{\omega\in \Omega} \mu(\omega)\log\frac{\mu(\omega)}{\nu(\omega)}=\ExX*{\omega\sim \mu}{\log \frac{\mu(\omega)}{\nu(\omega)}}. \]
An important well-known property of the KL divergence is nonnegativity. This can be proved for example by applying Jensen's inequality to the convex function $\log(1/\cdot)$ \cite[see, e.g.,][]{Mac03}.
\begin{fact}[Gibbs' inequality]\label{fact:KL}
	For any two distributions $\mu$ and $\nu$ we have $\KL{\mu\vs \nu}\geq 0$.
\end{fact}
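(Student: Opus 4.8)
The plan is to derive this from Jensen's inequality applied to the logarithm, which is concave on $(0,\infty)$. First I would dispose of the degenerate atoms: terms $\omega$ with $\mu(\omega)=0$ contribute nothing under the standard convention $0\log 0=0$, so they may be dropped; and if $\nu(\omega)=0$ for some $\omega$ with $\mu(\omega)>0$, then $\KL{\mu\vs \nu}=+\infty\geq 0$ and there is nothing to prove. Hence I may assume $\nu(\omega)>0$ whenever $\mu(\omega)>0$, and restrict all sums below to $\Omega'=\set{\omega\in\Omega\given \mu(\omega)>0}$.

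Next I would rewrite the divergence with the ratio flipped, $-\KL{\mu\vs \nu}=\sum_{\omega\in\Omega'}\mu(\omega)\log\frac{\nu(\omega)}{\mu(\omega)}=\ExX*{\omega\sim\mu}{\log\frac{\nu(\omega)}{\mu(\omega)}}$, so that we are taking an expectation of the logarithm of a nonnegative random variable. Jensen's inequality for the concave function $\log$ then gives
\[ \ExX*{\omega\sim\mu}{\log\frac{\nu(\omega)}{\mu(\omega)}}\leq \log\ExX*{\omega\sim\mu}{\frac{\nu(\omega)}{\mu(\omega)}}=\log\sum_{\omega\in\Omega'}\nu(\omega)\leq \log 1=0, \]
where the final inequality uses $\sum_{\omega\in\Omega'}\nu(\omega)\leq\sum_{\omega\in\Omega}\nu(\omega)=1$ and monotonicity of $\log$. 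Rearranging yields $\KL{\mu\vs \nu}\geq 0$, as claimed. There is no genuine obstacle here; the only point requiring care is the bookkeeping of zero-probability atoms and the $0\log 0=0$ convention, which is precisely what the first step handles.
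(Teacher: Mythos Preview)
Your argument is correct and is essentially the approach the paper indicates: the paper does not spell out a proof but remarks that the result follows by applying Jensen's inequality to the convex function $\log(1/\cdot)$, which is the same computation you carry out (just with the sign flipped to use concavity of $\log$). Your handling of the zero-probability atoms is the only extra bookkeeping, and it is done correctly.
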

	
	\section{Bethe permanent}\label{sec:bethe}

Doubly stochastic matrices have an intimate connection to random permutations. They can be thought of as marginals of distributions on permutations. More precisely, for any distribution $\mu$ on $S_n$, the $n\times n$ matrix $P$ defined by $P_{i, j}=\PrX{\sigma\sim \mu}{(i, j)\in \sigma}$ is doubly stochastic. Alternatively for any $P\in \B_n$, there is $\mu$ supported on $S_n$ whose marginals are $P$; this is because the vertices of $\B_n$ are permutation matrices.

Given a matrix $A\in \R_{\geq 0}^{n\times n}$, a natural distribution $\mu$ on $S_n$ can be defined by $\mu(\sigma)\propto \prod_{e\in \sigma} A_e$. The partition function, or the normalizing constant in this definition is $\per(A)$. Given these connections, it is no surprise that many upper and lower bounds on the $\per(A)$ can be expressed in terms of doubly stochastic matrices. For example the upper bound implicitly used by \textcite{LSW00} is
\[ \log \per(A)\leq \max\set*{\sum_{e\in [n]\times [n]} P_e \log(A_e/P_e)\given P\in \B_n}, \]
which can be interpreted as a form of subadditivity of entropy. The Van der Waerden conjecture used as a lower bound by \textcite{LSW00} can be rewritten as
\[ n+\log\per(A)\geq \sum_{e\in [n]\times [n]} P_e\log(A_e/P_e), \]
for all $P\in \B_n$. For a more detailed account of these connections see \textcite{Gur11}. We remark that the $P$ maximizing the r.h.s.\ of the above can also be obtained by matrix scaling. In other words, as long as $\per(A)>0$, there are sequences of diagonal matrices $R_1,R_2,\dots$ and $C_1,C_2,\dots$ such that
\[ \lim_{i\to \infty} R_iAC_i = \argmax\set*{\sum_{e\in [n]\times [n]}P_e\log(A_e/P_e)\given P\in \B_n}. \]

\textcite{CKV08, HJ09} observed that the $\mu$ distribution defined for a matrix $A$ can be described by a graphical model. Roughly speaking, one can define factors $R_1,\dots,R_n: 2^{[n]\times [n]}\to \R_{\geq 0}$ and $C_1,\dots,C_n: 2^{[n]\times [n]}\to\R_{\geq 0}$ where for $\sigma\subseteq [n]\times[n]$
\[ R_i(\sigma)=\begin{cases}
	\sqrt{A_{i, j}} & \text{if }\card{\sigma\cap (\set{i}\times [n])}=1\text{ and }{\sigma\cap (\set{i}\times [n])=\set{(i, j)}},\\ 
	0 & \text{if }\card{\sigma\cap (\set{i}\times [n])}\neq 1,
\end{cases}\]
\[ C_j(\sigma)=\begin{cases}
	\sqrt{A_{i, j}} & \text{if }\card{\sigma\cap ([n]\times \set{j})}=1\text{ and }{\sigma\cap ([n]\times \set{j})=\set{(i, j)}},\\ 
	0 & \text{if }\card{\sigma\cap ([n]\times \set{j})}\neq 1.
\end{cases}\]
With this definition we have
\[ \parens*{\prod_{i} R_i(\sigma)}\parens*{\prod_{j} C_j(\sigma)}=\begin{cases}
	\prod_{e\in \sigma} A_e & \text{if }\sigma\text{ is a permutation},\\
	0& \text{otherwise}.
\end{cases}\]
A practically successful heuristic for computing the partition function of graphical models is Bethe approximation; in fact for ``loopless'' graphical models the Bethe approximation exactly computes the partition function. The graphical model defined here is not ``loopless'', but the Bethe approximation can still be defined and in fact computed efficiently. For a detailed account of these, see \textcite{Von13}. We only describe the Bethe approximation for the aforementioned graphical model here, but we remark that Bethe approximation can be defined for arbitrary graphical models.
\begin{definition}\label{def:beta}
	For a matrix $A\in \R_{\geq 0}^{n\times n}$ and a doubly stochastic matrix $P\in \B_n$, define
	\[ \beta(A, P)=\sum_{e\in [n]\times [n]}\parens*{P_e \log(A_e/P_e)+(1-P_e)\log(1-P_e)}. \]	
\end{definition}
Note that according to this definition and \cref{def:BP} we have
\[ \bethe(A)=\max\set{e^{\beta(A, P)}\given P\in \B_n}. \]
\Textcite{Von13} proved that $\beta(A, P)$ is a concave function of $P$. \Textcite{Gur11} proved that $\beta(A,P)$ is always a lower bound for $\log(\per(A))$ by using an inequality of \textcite{Sch98}. For two recent alternative proofs see \textcite{Csi14, AO17}.

The matrix $P$ maximizing $\beta(A, P)$ is not necessarily obtained from $A$ by matrix scaling. It can be either obtained by convex programming methods since $\beta(A, P)$ is concave in $P$, or by the sum-product algorithm \cite{Von13}. Despite this, \textcite{GS14} analyzed the objective $\beta(A, P)$ for the matrix $P$ obtained from matrix scaling; in other words they analyzed $\beta(A, RAC)$ where $R, C$ are diagonal matrices such that $RAC\in \B_n$. They showed that
\[ \per(A)\leq 2^n e^{\beta(A, RAC)}. \] 

We choose a different doubly stochastic matrix. We analyze $\beta(A, P)$ where $P$ is the doubly stochastic matrix of $\mu$'s marginals. In other words we let $P_{i, j}=\PrX{\sigma\sim \mu}{(i, j)\in \sigma}$, where $\mu(\sigma)\propto \prod_{e\in \sigma} A_e$.
\begin{theorem}\label{thm:marginals}
	For $P$ defined as above, we have
	\[ \per(A)\leq \sqrt{2}^n e^{\beta(A, P)}. \]
\end{theorem}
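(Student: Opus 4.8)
The plan is to compare two distributions on the symmetric group $S_n$ via the nonnegativity of KL divergence (\cref{fact:KL}), exactly as sketched in the introduction, but with the ``greedy'' distribution $\nu$ built from the marginal matrix $P$ rather than from $A$ itself. Concretely, let $\mu(\sigma) \propto \prod_{e\in\sigma} A_e$, so $\log\per(A) = \ExX{\sigma\sim\mu}{\log\prod_{e\in\sigma}A_e} + \Ex{-\log\mu(\sigma)}$; the point is to write $\log\per(A) = \Ex{\log\prod_{e\in\sigma}A_e} + H(\mu)$ where $H(\mu)$ is the Shannon entropy. To control $H(\mu)$ I would introduce, for each fixed ordering $\pi\in S_n$ of the rows, the sequential sampling distribution $\nu_\pi$ that reveals the matched column of row $\pi(1)$, then $\pi(2)$, and so on, at each step choosing among the \emph{still-available} columns with probability proportional to the corresponding entry of $P$ (not $A$). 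Writing $H(\mu)$ as a sum of conditional entropies in the order dictated by $\pi$ and applying $\KL{\mu\vs\nu_\pi}\geq 0$ gives an upper bound on $H(\mu)$, hence on $\log\per(A)$, in terms of $P$, $A$, and quantities like $\PrX{\sigma\sim\mu}{\sigma(\pi(k))=j \mid \text{columns used by } \pi(1),\dots,\pi(k-1)}$.

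The next step is to average this bound over a well-chosen distribution on the orderings $\pi$ — the natural choice being to average over $\pi$ and its reverse $\overline\pi$, or over uniformly random $\pi$ — so that the awkward conditional probabilities telescope or simplify into functions of the single marginals $P_{i,j}$ alone. The target is an inequality of the shape $\log\per(A) \le \beta(A,P) + \sum_{i,j} P_{i,j}\,g(P_{i,j})$ for some universal one-variable function $g$, where the ``penalty'' term $\sum_{i,j}P_{i,j}\,g(P_{i,j})$ must be bounded by $\tfrac{n}{2}\log 2$. Since $\sum_{i,j}P_{i,j} = n$, it would suffice to show $g(p) \le \tfrac12\log 2$ pointwise on $[0,1]$, or more precisely to bound the relevant per-edge contribution; the constant $\sqrt2$ should emerge as the worst case, presumably attained near $p = 1/2$ reflecting the extremal $2\times 2$ block example mentioned in \cref{sec:bethe}. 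A key structural input here is that $P$ is the marginal matrix of $\mu$, which forces compatibility conditions between $P_{i,j}$ and the conditional probabilities appearing above (e.g.\ an edge with $P_{i,j}=0$ is never selected, and edges with $P_{i,j}=1$ are forced), and this is precisely what the introduction flags as the improvement over using $A$ directly.

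I expect the main obstacle to be the averaging/telescoping step: controlling the conditional probabilities $\PrX{\mu}{\sigma(\pi(k))=j \mid \text{partial information}}$ is delicate because, unlike in the product-measure setting, conditioning on which columns are already taken does not factorize nicely over $\mu$. The trick of pairing each ordering with its reverse is likely essential — it lets one symmetrize a term of the form $\PrX{\mu}{\sigma(i)=j \mid \text{some of the other rows}}$ against $\PrX{\mu}{\sigma(i)=j \mid \text{the complementary rows}}$, and a convexity/concavity estimate (e.g.\ an inequality of the form $x\log x + (1-x)\log(1-x)$ type, or Jensen applied to a geometric mean) then collapses both into $P_{i,j}\log P_{i,j}$ plus a bounded remainder. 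Getting the remainder bounded by exactly $\tfrac12\log 2$ per unit of marginal mass, with equality characterized, is where the real analytic work lies; after that, \cref{thm:marginals} follows immediately since $\bethe(A) \ge e^{\beta(A,P)}$ for this particular $P$, and \cref{thm:upperbound} follows a fortiori.
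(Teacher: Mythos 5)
Your overall strategy is exactly the paper's: compare $\mu$ with the sequential sampling distribution $\nu_\pi$ built from the marginal matrix $P$, use $\KL{\mu\vs\nu_\pi}\geq 0$, and average over orderings $\pi$ (including the pairing of $\pi$ with $\overline{\pi}$). One of the obstacles you flag is actually not an obstacle at all: you never need to control conditional probabilities of $\mu$ given partial matchings. The KL computation only requires $\ExX{\sigma\sim\mu}{\log\nu_\pi(\sigma)}$, and $\nu_\pi(\sigma)$ has an explicit product form whose $i$-th factor is $P_{i,\sigma(i)}/\sum_{j\geq_{\sigma\circ\pi}\sigma(i)}P_{i,j}$; since $\sigma\circ\pi$ is uniform on $S_n$ when $\pi$ is, averaging over $\pi$ eliminates all dependence on $\sigma$ beyond $\sigma(i)$, and the marginal of $\sigma(i)$ is exactly the $i$-th row of $P$. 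No chain-rule decomposition of $H(\mu)$ into conditional entropies is needed.

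The genuine gap is in your final step. After the averaging, the residual is \emph{not} of the form $\sum_{i,j}P_{i,j}\,g(P_{i,j})$ for a one-variable function $g$, so no pointwise bound $g(p)\leq\tfrac12\log 2$ is available. What one actually gets per row is
\[ \sum_{k}P_{i,k}\,\ExX*{\pi\sim S_n}{\log\textstyle\sum_{j\geq_\pi k}P_{i,j}}-\sum_k(1-P_{i,k})\log(1-P_{i,k}), \]
where the inner logarithm involves partial sums over the \emph{whole row}; the entries are coupled. After the $\pi$-versus-$\overline{\pi}$ pairing this becomes the maximization of a genuinely multivariate function $\phi$ over the simplex $\Delta_n$, whose maxima are attained at points with \emph{two} coordinates equal to $1/2$ (matching the $2\times 2$ block example), not at a per-entry worst case. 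Bounding $\max_{\Delta_n}\phi$ by $\log 2$ is where essentially all the analytic work of the paper lives: it requires a merging lemma (\cref{lem:reduction}) reducing to $n\leq 3$, and the $n=3$ case is verified by an exact rational-arithmetic computer search. Your proposal as written would stall at this point, since the decomposition you are counting on does not exist.
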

Note that \cref{thm:marginals} immediately implies \cref{thm:upperbound}. We will prove \cref{thm:marginals} in \cref{sec:upperbound}, but in the rest of this section we demonstrate why the factor $\sqrt{2}^n$ is tight. The example we use was already studied in \cite{GS14}, but for completeness we mention it here as well.

First let us compute $\beta(A, P)$, where $A$ is a $2\times 2$ matrix. Note that $\B_2$ is a one-dimensional family that can be described by a single parameter $p\in [0,1]$:
\[ \B_2=\set*{\begin{bmatrix}
	p& 1-p\\
	1-p& p\\
\end{bmatrix}\given p\in [0, 1]}. \]
Then it is an easy exercise to see $\beta(A, P)=p\log(A_{1, 1}A_{2, 2})+(1-p)\log(A_{1, 2}A_{2, 1})$, and as such $\bethe(A)=\max\set{A_{1,1}A_{2,2}, A_{1, 2}A_{2,1}}$. Clearly there can be a factor of $2$ difference between $\max\set{A_{1,1}A_{2,2}, A_{1, 2}A_{2, 1}}$ and the sum, $\per(A)=A_{1,1}A_{2,2}+A_{1, 2}A_{2, 1}$. This happens for example for the matrix
\[ A=\begin{bmatrix}
	1 & 1\\
	1 & 1\\
\end{bmatrix}, \]
for which $\per(A)=2\bethe(A)=\sqrt{2}^2\bethe(A)$. This shows that $\sqrt{2}^n$ is tight for $n=2$. 

For higher values of $n$, note that both $\per(\cdot)$ and $\bethe(\cdot)$ behave predictably w.r.t.\ block-diagonal matrices. In other words if $B\in \R_{\geq 0}^{n_1\times n_1}$ and $C\in \R_{\geq 0}^{n_2\times n_2}$, and $A$ is defined as
\[ A=\begin{bmatrix}
	B & 0 \\
	0 & C\\
\end{bmatrix}, \]
then $\per(A)=\per(B)\per(C)$ and $\bethe(A)=\bethe(B)\bethe(C)$. To see the latter, note that if $P\in \B_{n_1+n_2}$ has any nonzero entry in places where $A$ has a zero entry, then $\beta(A, P)=-\infty$. So to compute $\bethe(A)$, it is enough to maximize $\beta(A, P)$ over doubly stochastic matrices $P$ which also have a block-diagonal structure
\[ P=\begin{bmatrix}
	Q & 0\\
	0 & R\\
\end{bmatrix}. \]
But $P\in \B_n$ iff $Q\in \B_{n_1}$ and $R\in \B_{n_2}$. This shows that $\beta(A, P)=\beta(B, Q)+\beta(C, R)$ and consequently $\bethe(A)=\bethe(B)\bethe(C)$.
So starting from any matrix, we can construct higher-dimensional matrices by using a block-diagonal form and amplify the gap between $\per(\cdot)$ and $\bethe(\cdot)$. In particular for the $n\times n$ matrix
\[ A=I_{n/2}\otimes \begin{bmatrix}
	1 & 1\\
	1 & 1\\
\end{bmatrix}=\begin{bmatrix}
	1 & 1 & 0 & 0 & \dots & 0\\
	1 & 1 & 0 & 0 & \dots & 0\\
	0 & 0 & 1 & 1 & \dots & 0\\
	0 & 0 & 1 & 1 & \dots & 0\\
	\vdots & \vdots & \vdots & \vdots &\ddots & \vdots\\
	0 & 0 & 0 & 0 & \dots & 1\\
\end{bmatrix}, \]
we have $\per(A)=2^{n/2}=\sqrt{2}^n$ and $\bethe(A)=1$. Finally, we remark that it is also easy to construct examples $A$ where $\per(A)=\bethe(A)$. For example this happens when $A$ is a diagonal matrix, such as the identity matrix.
	
	\section{Entropy-based upper bound on the permanent}\label{sec:upperbound}

In this section we prove \cref{thm:marginals}. Our strategy is to upper bound the permanent by comparing two distributions on perfect matchings in terms of the KL divergence. We will ultimately show that the gap in our upper bound is related to the maximum of a function $\phi:\Delta_n\to \R$ which is independent of the matrix $A$. Then, in \cref{sec:phi} we analyze this function and obtain its maximum. We find it instructive to have the tight example from the end of \cref{sec:bethe} in mind throughout the proof and notice that the inequalities used are tight for this example.

Let $A\in \R_{\geq 0}^{n\times n}$ be a fixed matrix. Since we are upper bounding the permanent by a nonnegative quantity, we can assume w.l.o.g.\ that $\per(A)>0$. Then we can define a distribution $\mu$ on $S_n$ by $\mu(\sigma)\propto \prod_{e\in \sigma} A_e$, or in other words
\[ \mu(\sigma)=\frac{\prod_{e\in \sigma} A_e}{\per(A)}. \]
We will also fix a doubly stochastic matrix $P$ which is defined by
\[ P_{i, j}=\PrX{\sigma\sim \mu}{(i, j)\in \sigma}. \]
$P$ is doubly stochastic because for each $\sigma\in S_n$ and every $i$ there is a \emph{unique} $j$ such that $(i, j)\in \sigma$, and for every $j$ there is a \emph{unique} $i$ such that $(i, j)\in \sigma$.

Now let us fix an ordering $\pi\in S_n$ over rows of $P$ and define another distribution $\nu_\pi$ on $S_n$ by the following sampling procedure: We iterate over rows of $P$ according to the order $\pi$. In iteration $i$ we select a yet unselected column $j$ with probability proportional to $P_{\pi(i), j}$. We let $\sigma$ be the set of all $(\pi(i), j)$ selected by this procedure.

Let us compute $\nu_\pi(\sigma)$ explicitly for some given $\sigma$. When the sampling procedure produces $\sigma\in S_n$, in iteration $i$, the unselected columns are exactly $\sigma\circ \pi(i), \sigma\circ\pi(i+1),\dots,\sigma\circ\pi(n)$. So we pick the first one, $\sigma\circ\pi(i)$, with probability
\[ \frac{P_{\pi(i), \sigma\circ\pi(i)}}{\sum_{j=i}^n P_{\pi(i), \sigma\circ\pi(j)}}. \]
This means that
\begin{equation}\label{eq:prob}
	\nu_\pi(\sigma)=\prod_{i=1}^n\frac{P_{\pi(i), \sigma\circ\pi(i)}}{\sum_{j=i}^n P_{\pi(i), \sigma\circ\pi(j)}}=\prod_{i=1}^n \frac{P_{i, \sigma(i)}}{\sum_{j\geq_{\sigma\circ\pi} \sigma(i)} P_{i, j}},
\end{equation} 
where as a reminder $j\geq_{\sigma\circ\pi} \sigma(i)$ means that $j=\sigma(i)$ or $j$ appears after $\sigma(i)$ in the list $\sigma\circ\pi(1),\dots,\sigma\circ\pi(n)$.

It is worth mentioning that if $A$ was the tight example presented at the end of \cref{sec:bethe}, regardless of the choice of $\pi$, we would have $\mu=\nu_\pi$. But this is generally not true and $\mu$ can be quite different from $\nu_\pi$. Regardless, by \cref{fact:KL}, we have
\begin{equation}\label{eq:KL}
	\KL{\mu\vs \nu_\pi}=\ExX*{\sigma\sim \mu}{\log\frac{\mu(\sigma)}{\nu_\pi(\sigma)}}\geq 0.
\end{equation}
Now we decompose \cref{eq:KL} part by part. First we have
\begin{align}\label{eq:numerator}
	\ExX{\sigma\sim \mu}{\log \mu(\sigma)}&=\ExX*{\sigma\sim \mu}{\log \frac{\prod_{e\in \sigma} A_e}{\per(A)}}=\sum_{e\in [n]\times[n]}\PrX{\sigma\sim \mu}{e\in \sigma}\log(A_e)-\log(\per(A))\nonumber\\
	&=\sum_{e\in [n]\times[n]} P_e\log(A_e)-\log(\per(A)),
\end{align}
where the last equality follows from our choice of $P$, for which we have $\PrX{\sigma\sim \mu}{e\in \sigma}=P_e$. For the second part we use \cref{eq:prob} to derive
\begin{equation}\label{eq:denom}
	\ExX{\sigma\sim \mu}{\log \nu_\pi(\sigma)}=\ExX*{\sigma\sim \mu}{\log \parens*{\prod_{i=1}^n \frac{P_{i, \sigma(i)}}{\sum_{j\geq_{\sigma\circ\pi} \sigma(i)} P_{i, j}}}}=\sum_{e\in [n]\times[n]} P_e\log(P_e)-\sum_{i\in [n]}\ExX*{\sigma\sim \mu}{\log\sum_{j\geq_{\sigma\circ\pi} \sigma(i)} P_{i, j}}
\end{equation}

Combining \cref{eq:KL,eq:numerator,eq:denom} we get the inequality
\begin{equation}\label{eq:mainineq}
	\sum_{e\in [n]\times [n]}P_e\log(A_e/P_e)+\sum_{i\in [n]}\ExX*{\sigma\sim \mu}{\log\sum_{j\geq_{\sigma\circ\pi} \sigma(i)} P_{i,j}}\geq \log(\per(A)).
\end{equation}
This already looks very promising as we have $\log(\per(A))$ on one side and part of $\beta(A, P)$ on the other -- see \cref{def:beta}. However the extra sum does not look easy to analyze. In order to simplify it, we take the expectation of \cref{eq:mainineq} over $\pi$ chosen \emph{uniformly} at random from $S_n$, which we denote by $\pi\sim S_n$. Then we get
\begin{equation*}
	\sum_{e\in [n]\times [n]}P_e\log(A_e/P_e)+\sum_{i\in [n]}\ExX*{\sigma\sim \mu}{\ExX*{\pi\sim S_n}{\log\sum_{j\geq_{\sigma\circ\pi} \sigma(i)} P_{i, j}}}\geq \log(\per(A)).
\end{equation*}
The main benefit here is that for any $\sigma$, when $\pi$ is a uniformly random element of $S_n$, $\sigma\circ \pi$ is also a uniformly random element of $S_n$. So we can simplify this to
\begin{equation*}
	\sum_{e\in [n]\times [n]}P_e\log(A_e/P_e)+\sum_{i\in [n]}\ExX*{\sigma\sim \mu}{\ExX*{\pi\sim S_n}{\log\sum_{j\geq_{\pi} \sigma(i)} P_{i, j}}}\geq \log(\per(A)).
\end{equation*}
Notice that we have removed all dependence on $\sigma$ except for $\sigma(i)$. So we can simplify and get
\begin{equation}\label{eq:entropyub}
	\sum_{e\in [n]\times [n]}P_e\log(A_e/P_e)+\sum_{i\in [n]}\sum_{k\in [n]}P_{i, k}\cdot \ExX*{\pi\sim S_n}{\log\sum_{j\geq_{\pi} k} P_{i, j}}\geq \log(\per(A)).
\end{equation}
Now all appearances of $\mu$ have been removed. Our goal is to show that $\beta(A, P)+n\log(\sqrt{2})\geq \log(\per(A))$. Comparing with the above and \cref{def:beta}, it is enough to prove the following for each $i=1,\dots,n$:
\[ \log(\sqrt{2})+\sum_{k\in [n]}(1-P_{i, k})\log(1-P_{i, k})\geq \sum_{k\in [n]} P_{i, k}\cdot \ExX*{\pi\sim S_n}{\log \sum_{j\geq_\pi k} P_{i, j}}. \]
Proving this would finish the proof of \cref{thm:marginals}. Note that this is a statement that only involves the $i$-th row of $P$. So we can abstract away and prove the following lemma.
\begin{lemma}\label{lem:perms}
	Suppose that $(p_1,\dots,p_n)\in \Delta_n$. Then
	\[ \log(\sqrt{2})\geq \ExX*{\pi\sim S_n}{\sum_{k=1}^n\parens*{p_k \log\parens*{\sum_{j\geq_\pi k}p_j}-(1-p_k)\log(1-p_k)}}.  \]
\end{lemma}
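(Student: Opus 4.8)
The plan is to regard the right-hand side as a function $\phi\colon\Delta_n\to\R$ of the vector $p=(p_1,\dots,p_n)$,
\[
	\phi(p)=\ExX*{\pi\sim S_n}{\sum_{k=1}^{n}\parens*{p_k\log\Bigl(\textstyle\sum_{j\geq_\pi k}p_j\Bigr)-(1-p_k)\log(1-p_k)}},
\]
and to show that $\max_{p\in\Delta_n}\phi(p)=\log\sqrt{2}$. The bound $\phi(p)\geq\log\sqrt{2}$ is witnessed at $p=(\tfrac12,\tfrac12,0,\dots,0)$ and its coordinate permutations — this is exactly the tight example of \cref{sec:bethe}, and checking that $\phi$ equals $\log\sqrt{2}$ there is a one-line computation — so the entire content of \cref{lem:perms} is the inequality $\phi(p)\leq\log\sqrt{2}$.

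The first step is to put $\phi$ into a workable closed form. Coordinates $j$ with $p_j=0$ neither occur in any partial sum $\sum_{j\geq_\pi k}p_j$ nor contribute to $\sum_k(1-p_k)\log(1-p_k)$, so one may assume $p_j>0$ for all $j$ (allowing $n$ itself to vary). For fixed $k$, the only feature of a uniformly random $\pi$ that matters is the random set $\{j\ne k:j\geq_\pi k\}$, and sampling $\pi$ uniformly is the same as first choosing the size of this set uniformly from $\{0,\dots,n-1\}$ and then a uniform subset of that size; equivalently, introducing i.i.d.\ thresholds, $\ExX*{\pi\sim S_n}{\log\sum_{j\geq_\pi k}p_j}=\int_0^1\mathbb{E}\bigl[\log(p_k+\sum_{j\ne k}p_jB_j)\bigr]\,\mathrm{d}t$ with $B_j\sim\mathrm{Bernoulli}(t)$ independent. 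Either description makes $\phi$ an explicit, $A$-independent, symmetric function on the simplex.

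The crux is the maximization, and the obstacle is that $\phi$ is neither convex nor concave on $\Delta_n$: restricted to an edge of the simplex it collapses to $\tfrac12 H(\cdot)$ (binary entropy in nats — concave, peaking at the midpoint, which is how $(\tfrac12,\tfrac12,0,\dots)$ arises), yet $\phi$ vanishes at every vertex $e_i$ and behaves convexly along ``exchange'' directions in the interior. Thus neither a one-shot Jensen estimate ($\mathbb{E}[\log\sum_{j\geq_\pi k}p_j]\le\log\mathbb{E}[\sum_{j\geq_\pi k}p_j]=\log\tfrac{1+p_k}{2}$ is already too lossy at the tight example) nor a pure convexity/concavity argument closes the gap. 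Following the outline of \cref{sec:phi}, the approach I would take is: (i) use the symmetry of $\phi$ together with the stationarity (KKT) conditions and a sign analysis of $\partial\phi/\partial p_k$ to show that at any maximizer the nonzero coordinates take only boundedly many distinct values; (ii) combine this with monotonicity and exchange arguments to bound the \emph{number} of nonzero coordinates at a maximizer, collapsing the problem to an optimization over a low-dimensional family of fixed combinatorial shape; and (iii) discretize the few remaining continuous parameters and verify $\phi\le\log\sqrt{2}$ over the resulting finite (``moderate size'') set using exact rational arithmetic, so the computer run serves as a genuine certificate. I expect step~(ii) — reducing an a priori infinite-dimensional, non-convex optimization over all $n$ to a finite point set — to be the main obstacle; the surrounding pieces (the closed form, the vertex and edge evaluations, and the final one-variable fact $\max_{a\in[0,1]}\tfrac12 H(a)=\log\sqrt{2}$) are routine.
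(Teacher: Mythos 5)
There is a genuine gap: your write-up is a plan, not a proof, and the plan omits (by your own admission) exactly the step that carries all the difficulty. Your framing — maximize the symmetric, $\pi$-averaged function over $\Delta_n$, argue via stationarity/KKT that maximizers have bounded support and boundedly many distinct values, then finish by a finite certified search — leaves steps (i) and (ii) entirely unexecuted, and these are not routine. Nothing in the proposal shows why a maximizer of this non-convex symmetric function cannot have, say, $n$ nonzero coordinates with many distinct values; without that, there is no finite set to search. The correct closed form via Bernoulli thresholds and the observation that one-shot Jensen is too lossy are fine, but they do not advance the maximization.

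The paper closes this gap by a different and essential maneuver that your proposal does not contain: it never analyzes the full average over $S_n$. Instead it pairs each $\pi$ with its reverse $\overline{\pi}$ and shows the sum of the two corresponding terms is at most $2\log(\sqrt{2})$; since only the induced ordering matters, this reduces \cref{lem:perms} to a single inequality for the identity ordering,
\[
\log(2)\;\geq\;\sum_{k}p_k\log(p_1+\dots+p_k)+\sum_{k}p_k\log(p_k+\dots+p_n)-2\sum_{k}(1-p_k)\log(1-p_k),
\]
i.e.\ \cref{lem:ps} for the (order-dependent, not symmetric) function $\phi$ of \cref{def:pk}. The reduction to a finite problem is then achieved not by stationarity of the averaged function but by \cref{lem:reduction}: merging two \emph{consecutive} coordinates with $p_i+p_{i+1}\leq\gamma=(\sqrt{17}-3)/2$ never decreases $\phi$, and such a pair always exists for $n\geq 4$, so induction collapses everything to $n\leq 3$; only the $n=3$ case needs the certified grid search, and even there one must first cut out neighborhoods of the exact maxima (where the value equals $\log 2$) using \cref{lem:reduction} again before any $\epsilon$-net argument can succeed — a point your step (iii) also glosses over. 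To repair your proposal you would either need to actually prove your support/value bounds for the symmetric average (which looks harder than the original problem), or adopt the reversal-pairing plus consecutive-merge route.
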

\begin{proof}
	Ideally we would like to prove the inequality for each $\pi\in S_n$ and remove the expectation over $\pi$.
	 However, the inequality does not hold individually for each $\pi\in S_n$. Instead we pair up permutations $\pi$ with their reverses $\overline{\pi}$, and prove that the sum of the value inside the r.h.s.\ expectation over $\pi$ and $\overline{\pi}$ for any fixed $\pi$ is at most twice the l.h.s. This would finish the proof. Note that the only appearance of $\pi$ on the r.h.s.\ is in the ordering $\geq_\pi$ it induces. So up to reindexing of $p_1,\dots,p_n$ it is enough to prove this for the identity permutation, i.e., it is enough to prove
	\[ \log(2)\geq \sum_{k\in [n]} p_k\log(p_1+\dots+p_k)+\sum_{k\in [n]}p_k\log(p_k+\dots+p_n)-2\sum_{k\in [n]}(1-p_k)\log(1-p_k). \]
	Naming the r.h.s.\ of the above as $\phi(p_1,\dots,p_n)$, we will prove in \cref{sec:phi}, in \cref{lem:ps}, that $\max\set{\phi(p)\given p\in \Delta_n}$ is at most $\log(2)$. This finishes the proof since
	\[ \ExX*{\pi\sim S_n}{\sum_{k=1}^n\parens*{p_k \log\parens*{\sum_{j\geq_\pi k}p_j}-(1-p_k)\log(1-p_k)}}=\frac{1}{2}\ExX*{\pi\sim S_n}{\phi(p_{\pi(1)},\dots,p_{\pi(n)})}\leq \log(\sqrt{2}). \]
\end{proof}

	\section{Analyzing the maximum gap}\label{sec:phi}
In this section we prove that the maximum of the function $\phi:\Delta_n\to \R$ is $\log(2)$, finishing the proof of \cref{thm:marginals}. As a reminder, we define the function $\phi$ again below.
\begin{definition}\label{def:pk}
	For $p=(p_1,\dots,p_n)\in \Delta_n$ define
	\[ \phi(p_1,\dots,p_n)=\sum_{k\in [n]}p_k\log(p_1+\dots+p_k)+\sum_{k\in [n]}p_k\log(p_k+\dots+p_n)-2\sum_{k\in [n]}(1-p_k)\log(1-p_k). \]
\end{definition}
With some abuse of notation we use the same letter $\phi$ to refer to this function over $\Delta_n$ for any $n$.

A simple calculation shows that
\[ \phi(1/2,1/2)=\log(2). \]
Also notice that zero coordinates in the input to $\phi$ can be dropped without changing the value of $\phi$. In other words, for any $p\in \Delta_{n-1}$ and any $i$, we have
\[ \phi(p_1,\dots,p_i,0,p_{i+1},\dots,p_{n-1})=\phi(p_1,\dots,p_{n-1}). \]
This shows that for any two distinct elements $\set{i, j}\subseteq [n]$ we have
\[ \phi(\1_{\set{i, j}}/2)=\log(2). \]
Our goal is to show that these $\binom{n}{2}$ points are the global maxima of $\phi$ over $\Delta_n$. The main result of this section is
\begin{lemma}\label{lem:ps}
	For any $p\in \Delta_n$ we have $\phi(p)\leq \log(2)$.
\end{lemma}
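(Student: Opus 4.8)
The plan is an induction on $n$, reducing the statement to a claim about the interior critical points of $\phi$ and then pinning those down to an explicit finite list. Since $\phi$ is continuous on the compact simplex $\Delta_n$, it attains its maximum at some $p^{*}$. If some coordinate of $p^{*}$ vanishes then, by the invariance of $\phi$ under deleting zero coordinates noted above, $\phi(p^{*})=\phi(q)$ for some $q\in\Delta_{n-1}$, and we are done by the induction hypothesis; the base cases are $n=1$, where $\phi\equiv 0$, and $n=2$, where a direct computation gives $\phi(p_1,p_2)=-p_1\log p_1-p_2\log p_2\le\log 2$. So it remains to bound $\phi$ at a maximizer $p$ in the relative interior of $\Delta_n$, where $\phi$ is smooth and the Lagrange condition supplies a scalar $\lambda$ with $\partial_m\phi=\lambda$ for every $m$.

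Write $s_k=p_1+\dots+p_k$ and $t_k=p_k+\dots+p_n$. Differentiating, one finds
\[
	\partial_m\phi=\log(s_m t_m)+2\log(1-p_m)+\sum_{k=m}^{n}\frac{p_k}{s_k}+\sum_{k=1}^{m}\frac{p_k}{t_k}+2 ,
\]
so an interior critical point is precisely a solution of $\sum_k p_k=1$ together with, for each $m<n$,
\[
	\log\frac{s_{m+1}t_{m+1}}{s_m t_m}+2\log\frac{1-p_{m+1}}{1-p_m}-\frac{p_m}{s_m}+\frac{p_{m+1}}{t_{m+1}}=0 .
\]
Generically this system is zero-dimensional, but the argument needs an explicit finite superset of its solutions.

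To obtain one I would use one-parameter restrictions of $\phi$. Moving mass between two adjacent coordinates $p_i,p_{i+1}$ while holding $p_i+p_{i+1}$ and all other coordinates fixed yields, in the variable $t=p_{i+1}$, a function whose second derivative is
\[
	\frac{1}{s_i}+\frac{s_{i-1}}{s_i^{2}}+\frac{1}{t_{i+1}}+\frac{t_{i+2}}{t_{i+1}^{2}}-\frac{2}{1-p_i}-\frac{2}{1-p_{i+1}} ,
\]
and controlling the sign pattern of this expression constrains how a maximizer can distribute mass among neighbours. Combining such restrictions with ones to slices on which a block of coordinates is forced equal (which lowers the effective dimension), and with the reversal symmetry $\phi(p_1,\dots,p_n)=\phi(p_n,\dots,p_1)$, I expect to cut the solution set of the critical-point system down to a finite list of moderate size; one natural route is to show that any interior maximizer has only a bounded number of distinct coordinate values, each obeying an explicit algebraic equation. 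One then evaluates $\phi$ at each candidate and checks $\phi\le\log 2$; since the candidates and the inequality are specified by explicit algebraic data, this last step can be carried out by a computer search in exact rational arithmetic, whose successful execution is a formal certificate.

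The main obstacle is the reduction in the previous paragraph: $\phi$ is far from concave and genuinely has many interior critical points (local minima and saddles), so ruling out a diffuse interior maximum, and more delicately trimming the solution set of the Lagrange system to an enumerable list, is where essentially all the work lies. The boundary reduction, the base cases, and the final numeric verification are routine once that structural reduction is in place.
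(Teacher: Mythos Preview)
Your proposal is not a proof but a plan, and the plan has a genuine gap at its most important step. The boundary reduction and the base cases $n=1,2$ are fine. But the heart of the matter---bounding $\phi$ at an \emph{interior} maximizer of $\Delta_n$---is left unresolved. You write down the Lagrange system (correctly) and then say you ``expect'' to cut its solution set down to a finite list using sign patterns of second derivatives along one-parameter slices, symmetry, and block structure; but none of this is carried out, and you yourself flag that ``essentially all the work lies'' there. The difficulty is real: the critical-point equations are transcendental, and even if the solution set were finite for each fixed $n$, your scheme would require handling the interior case separately for \emph{every} $n\ge 3$, since your induction only lowers the dimension when a coordinate actually vanishes. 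As written, there is no mechanism to get from ``the interior maximizer satisfies these equations'' to ``$\phi\le\log 2$ there,'' for a single $n\ge 3$, let alone all of them.

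The paper's route is entirely different and bypasses critical-point analysis. The key is a \emph{merging lemma}: if $(q,r,s,t)\in\Delta_4$ with $r+s\le\gamma=(\sqrt{17}-3)/2$, then $\phi(q,r,s,t)\le\phi(q,r+s,t)$; moreover the difference $\phi(p_1,\dots,p_n)-\phi(p_1,\dots,p_{i-1},p_i+p_{i+1},p_{i+2},\dots,p_n)$ equals $\phi(q,r,s,t)-\phi(q,r+s,t)$ for suitable $q,r,s,t$, so the lemma lets one merge any adjacent pair with small sum. Since for $n\ge 4$ some adjacent pair satisfies $p_i+p_{i+1}\le 1/2<\gamma$, one can always merge and reduce to $n\le 3$. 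Only $n=1,2,3$ then need direct verification; the first two are as you say, and for $n=3$ the same merging lemma confines attention to the compact region $\{q,s\le 1-\gamma\}$, which is bounded away from the conjectured maxima, after which a finite computer search over an $\epsilon$-net in exact rational arithmetic certifies the bound. So the operative reduction is a dimension-lowering \emph{inequality} valid at interior points, not a boundary/Lagrange dichotomy; that is the missing idea in your outline.
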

Our strategy for proving \cref{lem:ps} is to prove it for $n=1,2,3$ and then reduce general $n$ to one of these cases. The main tool enabling this reduction is the following lemma.
\begin{lemma}\label{lem:reduction}
	Suppose that $(q,r,s,t)\in \Delta_4$ and $r+s\leq \gamma$ where $\gamma=(\sqrt{17}-3)/2\simeq 0.56$ is a constant. Then
	\[ \phi(q,r,s,t)\leq \phi(q,r+s,t). \]
\end{lemma}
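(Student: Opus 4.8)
The plan is to prove \cref{lem:reduction} by reducing it to a one-variable inequality and then verifying that inequality analytically. First I would compute the difference $\Delta(q,r,s,t) = \phi(q,r+s,t) - \phi(q,r,s,t)$ term by term. Writing out \cref{def:pk} for both sides, most of the ``prefix'' and ``suffix'' logarithmic sums collapse: the partial sums $p_1+\dots+p_k$ appearing in $\phi(q,r+s,t)$ are exactly the partial sums of $(q,r,s,t)$ with the $s$-term removed, and similarly for the reversed sums. So the only surviving discrepancies involve the coordinates $r$ and $s$ themselves (both in the prefix/suffix log terms that straddle the merge point, and in the $-2(1-p_k)\log(1-p_k)$ penalty terms), plus terms involving $q$ and $t$ through the values $q$, $q+r+s$, $r+s+t$, $t$. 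After cancellation I expect $\Delta$ to depend on $q$, $t$ only through the combination $q+t = 1-(r+s)$, i.e.\ after fixing $u := r+s$, the difference is a function of $u$ and of $r$ (or equivalently of the split of $u$ into $r$ and $s$), not of how $q+t$ is divided.

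Next I would argue monotonicity/convexity to pin down the worst case over the split $(r,s)$ with $r+s=u$ fixed. The terms $-2(1-r)\log(1-r) - 2(1-s)\log(1-s)$ versus $-2(1-u)\log(1-u)$ contribute a piece of the form $g(r)+g(s)-g(u)-g(0)$ where $g(x)=-2(1-x)\log(1-x)$; since $g$ is concave, $g(r)+g(s)$ is maximized (making the reduction hardest to prove) at the balanced split $r=s=u/2$, or minimized at the extreme $r=u,s=0$ — I'd check which direction the full $\Delta$ moves. The remaining $r,s$-dependence comes from terms like $r\log(\cdot) + s\log(\cdot)$ where the arguments are nested partial sums; I'd bound these by their values at an extreme point of the segment $\{(r,s): r+s=u, r,s\ge 0\}$ using convexity in $r$ of each summand. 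The upshot should be: it suffices to prove $\Delta \ge 0$ when $s=0$ (trivial, $\Delta=0$) together with a derivative condition, OR to prove $\Delta\ge 0$ at $r=s=u/2$, reducing to a single-variable inequality in $u\in[0,\gamma]$.

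Then I would verify the resulting one-variable inequality $h(u)\ge 0$ for $u\in[0,\gamma]$. Here is where the specific constant $\gamma=(\sqrt{17}-3)/2$ must enter: it should be exactly the root of the relevant equation, presumably $h(\gamma)=0$ or $h'(\gamma)=0$, which is why $\gamma$ satisfies $\gamma^2+3\gamma-2=0$ (equivalently $\gamma = 2/(\gamma+3) = 1-\gamma-\gamma^2$, making $1-\gamma = \gamma+\gamma^2 = \gamma(1+\gamma)$ — a relation that will cause a clean factorization). I'd show $h(0)=0$, compute $h'$, and argue $h$ is increasing on $[0,\gamma]$ (or has no interior zero), using the algebraic identity for $\gamma$ to handle the boundary. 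Since $h$ is a combination of $x\log x$-type terms, $h'$ is a combination of $\log$ terms, and $h''$ is rational; establishing the sign of $h$ on the interval should come down to checking the sign of $h''$ (a rational function, hence finitely many sign changes, checkable exactly) plus the boundary values.

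The main obstacle I anticipate is the bookkeeping in the first step — correctly tracking which partial-sum arguments survive cancellation and assembling $\Delta$ into a form where the $(r,s)$-split and the scalar $u=r+s$ cleanly separate — followed by justifying that the worst case is at a corner or the midpoint of the split segment rather than somewhere in the interior; the coupling between the $g(r)+g(s)$ penalty terms (concave, favoring the midpoint) and the nested-log terms (whose convexity favors the corners) means the reduction to one variable is not automatic and may require handling two sub-cases. The constant $\gamma\simeq 0.56$ being strictly less than $1$ strongly suggests the inequality genuinely fails for larger $u$, so the proof must be tight at $u=\gamma$, leaving no slack to be sloppy with.
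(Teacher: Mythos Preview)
Your plan contains a genuine error at the first step. You assert that after cancellation the difference $\Delta = \phi(q,r+s,t)-\phi(q,r,s,t)$ ``depends on $q$, $t$ only through the combination $q+t = 1-(r+s)$.'' This is false. A direct computation (which the paper carries out) gives
\[
\phi(q,r,s,t)-\phi(q,r+s,t)=r\log\frac{q+r}{q+r+s}+s\log\frac{s+t}{r+s+t}-2(1-r)\log(1-r)-2(1-s)\log(1-s)+2(1-r-s)\log(1-r-s),
\]
and the first two terms depend on $q$ and $t$ separately (one through $q+r$ and $q+r+s=1-t$, the other through $s+t$ and $r+s+t=1-q$). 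So you cannot skip directly to a one-variable problem in $u=r+s$; there is a nontrivial optimization over the $q$--$t$ split that must be done first.

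The paper's proof handles exactly this missing step: it fixes $r,s$, observes that the difference is concave in the perturbation $q\mapsto q+\epsilon$, $t\mapsto t-\epsilon$, and solves for the optimal $q^*(r,s),t^*(r,s)$ explicitly. The condition $r+s\le\gamma$ is what guarantees $q^*,t^*\ge 0$ (so the interior critical point is feasible). Only after plugging in $q^*,t^*$ does one get a clean two-variable function $\psi(r,s)$, symmetric in $r,s$; the paper then fixes $C=r+s$ and shows $\frac{d}{dx}\psi(x,C-x)\le 0$ on $[0,C/2]$, which ultimately reduces to the polynomial inequality $2-3C-C^2\ge 0$, i.e.\ $C^2+3C-2\le 0$ --- exactly the quadratic you correctly identified for $\gamma$. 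So your instinct about where $\gamma$ enters is right, but it enters \emph{twice}: once to make the $q,t$-optimization land in the interior, and once in the final rational inequality. Your proposal is missing the first of these, and without it the reduction to one variable does not go through.
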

Let us now see why \cref{lem:ps} can be proved, assuming the cases $n=1,2,3$ and \cref{lem:reduction}.
\begin{proof}[Proof of \cref{lem:ps} assuming the proof for $n=1,2,3$ and \cref{lem:reduction}]
	Suppose that $p=(p_1,\dots,p_n)\in \Delta_n$. If there is an $i$ such that $p_i+p_{i+1}\leq \gamma$, then we claim that
	\[ \phi(p_1,\dots,p_n)\leq \phi(p_1,\dots,p_{i-1},p_i+p_{i+1},p_{i+2},\dots,p_n). \]
	To see this let $q=p_1+\dots+p_{i-1}$, $r=p_i$, $s=p_{i+1}$, and $t=p_{i+2}+\dots+p_n$. Notice that $(q,r,s,t)\in \Delta_4$ and it follows from \cref{def:pk} that
	\[ \phi(p_1,\dots,p_{i-1},p_i+p_{i+1},p_{i+2},\dots,p_n)-\phi(p_1,\dots,p_n)=\phi(q,r+s,t)-\phi(q,r,s,t), \]
	and by \cref{lem:reduction} this quantity is nonnegative.
	
	We just proved that if there is a consecutive pair $p_i, p_{i+1}$ in $p_1,\dots,p_n$ that add to at most $\gamma$ we can reduce $n$ without decreasing $\phi$, by replacing $p_i, p_{i+1}$ with $p_i+p_{i+1}$. Notice that such a pair always exists for $n\geq 4$. This is because $(p_1+p_2)+(p_3+p_4)\leq 1$, so at least one of $p_1+p_2$ and $p_3+p_4$ must be $\leq 1/2<\gamma$. So by induction the problem gets reduced to the cases $n=1,2,3$.
\end{proof}
Next we prove \cref{lem:reduction}. Afterwards we prove $\cref{lem:ps}$ for $n=1,2,3$ finishing the whole proof.
\begin{proof}[Proof of \cref{lem:reduction}]
	We want to show that $\phi(q,r,s,t)-\phi(q,r+s,t)\leq 0$. To reduce the number of variables, let us fix $r,s$ and choose $q, t$ in a way that $\phi(q,r,s,t)-\phi(q,r+s,t)$ is maximized. It is then enough to show that the maximum is nonpositive. One can compute that
	\begin{multline*}
		\phi(q,r,s,t)-\phi(q,r+s,t)=r\log\parens*{1-\frac{s}{q+r+s}}+s\log\parens*{1-\frac{r}{r+s+t}}\\
		-2(1-r)\log(1-r)-2(1-s)\log(1-s)+2(1-r-s)\log(1-r-s).
	\end{multline*}
	If we replace $q, t$ by $q+\epsilon, t-\epsilon$ for some $\epsilon\in [-q, t]$, then only the first two terms in the above change. One can easily check that the above is a concave function of $\epsilon$ by checking concavity of the first two summands in terms of $q, t$. So in order to find the $\epsilon$ maximizing the above, we can set the derivative w.r.t.\ $\epsilon$ to zero. If $q, t$ were already the maximizers, then either the derivative at $\epsilon=0$ would vanish or one of $q, t$ would be $0$. We will show that the first case happens for $r+s\leq \gamma$, and obtain an explicit expression for the maximizing $q,t$. Let us compute the derivative:
	\[ \eval{\frac{d}{d\epsilon}\parens*{\phi(q+\epsilon,r,s,t-\epsilon)-\phi(q+\epsilon,r+s,t-\epsilon)}}_{\epsilon=0}=\frac{s}{(q+r)(q+r+s)}-\frac{r}{(s+t)(r+s+t)}. \]
	One can compute that this derivative vanishes at
	\begin{equation}\label{eq:optqt}
		q^*(r,s)=\frac{1-r(1+r+s)}{2+r+s}\qquad\text{and}\qquad t^*(r,s)=\frac{1-s(1+r+s)}{2+r+s},
	\end{equation}
	and for this choice of $q=q^*(r,s), t=t^*(r,s)$ we have $q+r+s+t=1$. So as long as $q^*(r,s),t^*(r,s)\geq 0$, this will be the maximizer of $\phi(q,r,s,t)-\phi(q,r+s,t)$ for fixed $r,s$. But $q^*(r,s),t^*(r,s)\geq 0$ exactly when $(1+r+s)\cdot \max(r,s)\leq 1$. Since we have assumed $s+t\leq \gamma$, we can deduce
	\[ (1+r+s)\max(r,s)\leq (1+\gamma)\gamma<1, \]
	where the last inequality can be easily checked for the constant $\gamma=(\sqrt{17}-3)/2$. So from now on, we assume that \cref{eq:optqt} holds and $q=q^*(r,s)$ and $t=t^*(r,s)$. With this choice we can define
	\begin{multline*}
		\psi(r,s):=\phi(q^*(r,s),r,s,t^*(r,s))-\phi(q^*(r,s),r+s,t^*(r,s))=\\
		-(r+s)\log(1+r+s)+(s-r)\log\parens*{\frac{1+s}{1+r}}\\
		-2(1-r)\log(1-r)-2(1-s)\log(1-s)+2(1-r-s)\log(1-r-s).
	\end{multline*}
	Our goal is to prove that $\psi(r,s)\leq 0$ as long as $r+s\leq \gamma$. Let us call $C=r+s$. We can look at the function $\psi(x, C-x)$ as $x$ varies from $0$ to $C$; then our goal would be to show $x\mapsto \psi(x, C-x)$ is nonpositive over this range, as long as $C\leq \gamma$. Note that $\psi(\cdot, \cdot)$ is symmetric in its inputs, so it is enough to prove $x\mapsto \psi(x, C-x)$ is nonpositive for $x\in [0, C/2]$. For this we will show that for $x\in [0, C/2]$ we have $\frac{d}{dx}\psi(x, C-x)\leq 0$. This would finish the proof because assuming the derivative is nonpositive, we have $\psi(0, C)=0$ and
	\[ \psi(x, C-x)=\int_{0}^x \frac{d}{dy}\psi(y, C-y)dy\leq 0. \]
	So all that remains to finish the proof is to show that $\frac{d}{dx}\psi(x, C-x)\leq 0$ when $x\in [0, C/2]$ and $C\leq \gamma$. For brevity of expressions we use $r=x$ and $s=C-x$ and notice that $\frac{d}{dx}r=1$ and $\frac{d}{dx}s=-1$. The term $r+s$ and functions of $r+s$ are constant w.r.t.\ $x$, so their derivatives vanish. This simplifies the calculations and we obtain $\frac{d}{dx}\psi(r, s)=\frac{d}{dx}\psi(x, C-x)=$
	\begin{multline*}
		\frac{d}{dx}\parens*{(s-r)\log\parens*{\frac{1+s}{1+r}}-2(1-r)\log(1-r)-2(1-s)\log(1-s)}=\\
		-2\log\parens*{\frac{1+s}{1+r}}-(s-r)\parens*{\frac{1}{1+r}+\frac{1}{1+s}}+2\log\parens*{\frac{1-r}{1-s}}=\\
		(s-r)\cdot\frac{\eval{2\log\parens*{\frac{1}{1-z^2}}}_{z=r}^{z=s}}{s-r}-(s-r)\cdot \parens*{\frac{1}{1+r}+\frac{1}{1+s}}.
	\end{multline*}
	Noting that $s-r\geq 0$, it is enough to prove that
	\[ \frac{\eval{2\log\parens*{\frac{1}{1-z^2}}}_{z=r}^{z=s}}{s-r}\leq \frac{1}{1+r}+\frac{1}{1+s}. \]
	Note that $\frac{d}{dz}(2\log(1/(1-z^2)))=4z/(1-z^2)$. So we can rewrite the above as
	\[ \frac{\int_{r}^s \frac{4z}{1-z^2}dz}{s-r}\leq \frac{1}{1+r}+\frac{1}{1+s}. \]
	The l.h.s.\ is the average value of $4z/(1-z^2)$ over the interval $[r, s]$. The function $z\mapsto 4z/(1-z^2)$ is convex over $[r,s]\subseteq [0,1]$, so the average over $[r, s]$ can be upper bounded by the average over the two endpoints $\set{r, s}$. So it would be enough to prove
	\[ \frac{2r}{1-r^2}+\frac{2s}{1-s^2}\leq \frac{1}{1+r}+\frac{1}{1+s}. \]
	Multiplying all sides by $(1-r^2)(1-s^2)$ and rearranging the terms it would be enough to prove the following
	\[ (1-3r)(1-s^2)+(1-3s)(1-r^2)\geq 0. \]
	The l.h.s.\ can be rewritten as
	\[ 2-3(r+s)(1-rs)-(r^2+s^2)\geq 2-3(r+s)-(r+s)^2=2-3C-C^2, \]
	where for the inequality we used the fact that $1-rs\leq 1$ and $r^2+s^2\leq (r+s)^2$. The above is a quadratic in $C$ which has roots at $\frac{1}{2}(\pm\sqrt{17}-3)$. So it is nonnegative between the roots, and in particular on the interval $C\in [0, (\sqrt{17}-3)/2]=[0,\gamma]$.
\end{proof}
Now we just need to prove \cref{lem:ps} for $n=1, 2, 3$.
\begin{proof}[Proof of \cref{lem:ps} for $n=1$]
	For $n=1$, the only $p\in \Delta_1$ is $p=1$. We can easily compute $\phi(1)=0\leq \log(2)$ as claimed.
\end{proof}
\begin{proof}[Proof of \cref{lem:ps} for $n=2$]
	We have $\Delta_2=\set{(q, 1-q)\given q\in[0,1]}$. So we can calculate and see that
	\[ \phi(q, 1-q)=-q\log(q)-(1-q)\log(1-q). \]
	This is a concave function of $q$ and it is symmetric on $[0, 1]$. So its maximum is achieved at $q=1/2$, where it takes the value $\phi(1/2,1/2)=\log(2)$.
\end{proof}
\begin{proof}[Proof of \cref{lem:ps} for $n=3$]
	We have to show that if $q,r,s\geq 0$ and $q+r+s=1$, then $\phi(q,r,s)\leq \log(2)$. This looks difficult to prove at first; there are at least three points where equality can be attained, namely $(1/2,0,1/2),(1/2,1/2,0),(0,1/2,1/2)$, and we want to prove these are the maxima. This suggests that convexity arguments would not work, and we need to somehow break the symmetry. One can gain some confidence about the correctness of the statement by using a computer program to calculate $\phi$ over an $\epsilon$-net of $\Delta_3$, but this would not constitute a proof. Instead we use some tricks to reduce the search space to a finite set and use exact rational arithmetic, in order to formally verify the statement with the help of a computer program.
	
	Our insight is that we only need to upper bound $\phi(q,r,s)$ for $(q,r,s)\in U\subseteq \Delta_3$, where $U$ is a subset of $\Delta_3$ which is far from the three points conjectured to be the maxima. To see this, note that if $q+r\leq \gamma$ or $r+s\leq \gamma$, then we can again use \cref{lem:reduction} to reduce to the case $n=2$ which we have already proved. So we can define
	\[ U=\set{(q,r,s)\in \Delta_3\given q+r\geq \gamma \wedge r+s\geq \gamma}=\set{(q,r,s)\in \Delta_3\given q,s\leq 1-\gamma}. \]
	Notice that none of the three conjectured maxima are inside $U$ as for each one of them either $q=1/2>1-\gamma$ or $s=1/2>1-\gamma$.
	
	So now we can potentially form an $\epsilon$-net for $U$, bound the error of approximating $\phi$ at any point by the closest net point and hope that the approximation error is smaller than $\log(2)-\max\set{\phi(p)\given p\in U}$. The latter would be a positive number, so taking $\epsilon$ small enough should in principle yield a proof.
	
	\begin{algorithm}
	\caption{An algorithm to verify $\max\set{\phi(q,r,s)\given (q,r,s)\in U}\leq \log(2)$.}
		$\epsilon\leftarrow 1/N$
		
		$M\leftarrow \lfloor 44N/100\rfloor$ \tcc{This is the upper bound for $i, j$, because $1-\gamma<0.44$.}
		
		\For{$i=0,1,\dots,M$}{
			\For{$j=0,1,\dots,M$}{
				$\qbar \leftarrow \frac{i}{N}$
				
				$\sbar \leftarrow \frac{j}{N}$
				
				\eIf{$i=j=0$ and $N>100$}{
					Verify the following:
					\[ (\qbar+\epsilon)^i (\sbar+\epsilon)^j\leq 2^N (1-\qbar-\epsilon)^{N-i+j+1}(1-\sbar-\epsilon)^{N+i-j+1}(\qbar+\sbar+2\epsilon)^{2(i+j+2)} \]
				}{
					Verify the following:
					\[ (\qbar+\epsilon)^i (\sbar+\epsilon)^j\leq 2^N (1-\qbar-\epsilon)^{N-i+j+1}(1-\sbar-\epsilon)^{N+i-j+1}(\qbar+\sbar)^{2(i+j+2)} \]
				}
			}
		}
	\label{alg:verify}
	\end{algorithm}
	\DeclareFixedFont{\ttb}{T1}{txtt}{bx}{n}{12} 
	\DeclareFixedFont{\ttm}{T1}{txtt}{m}{n}{12}  
	\begin{algorithm}
		\caption{Implementation of \cref{alg:verify} for $N=2000$ in Python 3.}
		\lstset{
			language=Python,
			basicstyle=\ttm,
			otherkeywords={self, assert},             
			keywordstyle=\ttb\color{Green},
			emph={Fraction},          
			emphstyle=\ttb\color{Orange},    
			stringstyle=\color{Green},
			showstringspaces=false            %
		}
		\lstinputlisting[language=Python]{code.py}
		\label{alg:python}
	\end{algorithm}
	
	The main problem with this approach would be computing $\phi$ over the $\epsilon$-net since we would potentially be dealing with irrational numbers that we cannot even write down. We can however resolve this by taking $\exp$ from both sides of the desired inequality $\phi(q,r,s)\leq \log(2)$. Let us first simplify the expression for $\phi(q,r,s)$ by replacing $r$ with $1-q-s$ and calculating
	\[ \phi(q,r,s)=q\log(q)+s\log(s)-(1-q+s)\log(1-q)-(1+q-s)\log(1-s)-2(q+s)\log(q+s). \]
	Taking $\exp$ we get
	\[ \exp(\phi(q,r,s))=\frac{q^qs^s}{(1-q)^{1-q+s}(1-s)^{1+q-s}(q+s)^{2q+2s}}.\]
	So if we want to show that $\phi(q,r,s)\leq \log(2)$, it is enough to show that the above is at most $2$. However note that even if $q, s$ were rational numbers, the above would not necessarily be rational. We again use a simple trick of raising to a power. Suppose that $N\in \Z_{>0}$ is a natural number. Then $\phi(q,r,s)\leq \log(2)$ iff
	\[ q^{Nq}s^{Ns}\leq 2^N(1-q)^{N(1-q+s)}(1-s)^{N(1+q-s)}(q+s)^{2N(q+s)}. \]
	Now if $q, s$ are rationals and $N$ is such that $Nq,Ns\in \Z$, then both sides of the above would be rational numbers. So we can potentially compute both sides exactly and compare them.
	
	Combining all the ideas, suppose that $N$ is a natural number and let $\epsilon=1/N$. Suppose that $\qbar=i/N$ and $\sbar=j/N$ for some $i, j\in \set{0, \dots,N-1}$ and we want to certify that $\phi(q,1-q-s, s)\leq \log(2)$ for all $(q, s)\in [\qbar,\qbar+\epsilon]\times [\sbar, \sbar+\epsilon]$. Then it would suffice to check the following
	\begin{equation}\label{eq:verify} (\qbar+\epsilon)^i (\sbar+\epsilon)^j\leq 2^N (1-\qbar-\epsilon)^{N-i+j+1}(1-\sbar-\epsilon)^{N+i-j+1}(\qbar+\sbar)^{2(i+j+2)}, \end{equation}
	where we have upper bounded the terms on the l.h.s.\ and lower bounded the terms on the r.h.s. All numbers appearing as the base of the exponent are at most $1$, so on the l.h.s.\ we want the minimum possible exponents and on the r.h.s.\ the opposite.
	
	Our goal is to cover $U$ with patches of the form $[\qbar,\qbar+\epsilon]\times[\sbar,\sbar+\epsilon]$ and for each such $\qbar, \sbar$ verify \cref{eq:verify}. One might hope that taking $N$ large enough would give us enough room to absorb the approximation errors. Unfortunately there is a corner case that never gets resolved for any $N$; \cref{eq:verify} will never be satisfied for $\qbar, \sbar=0$, as the r.h.s.\ will be $0$. The problematic term is the last one: $(\qbar+\sbar)^{2(i+j+2)}$. To resolve this, note that $x^x$ is a monotone decreasing function of $x$ as long as $x<1/e$. So if $\qbar+\sbar+2\epsilon<1/e$, then we can replace the problematic term with $(\qbar+\sbar+2\epsilon)^{2(i+j+2)}$ which is nonzero. The condition $\qbar+\sbar+2\epsilon<1/e$ is automatically satisfied for $\qbar=\sbar=0$ when $N>2e$.
	
	The algorithm described above can be seen in \cref{alg:verify}. The arithmetic has to be done exactly by using fractional numbers. We implemented the algorithm in Python 3, which supports infinite precision arithmetic with fractional numbers natively, and found that for $N=2000$ all tests pass successfully. The implementation can be found in \cref{alg:python}, which finished running in a few hours on a MacBook Pro computer.
\end{proof}

	\section{Upper bound based on fractional belief propagation}
\label{sec:bp}

In this section, we prove \cref{thm:bp}. Our goal here is to prove that there exists a doubly stochastic matrix $P$, for which
\[ \per(A)\leq \prod_{i,j}(A_{i,j}/P_{i,j})^{P_{i,j}}\sqrt{(1-P_{i,j})^{1-P_{i,j}}}. \]
We remark that this upper bound is again tight for the example mentioned at the end of \cref{sec:bethe}. Naturally, we prove that this inequality holds when $P$ is the same matrix used in \cref{sec:upperbound}, the marginal matrix of the natural distribution $\mu$ on perfect matchings. Taking $\log$ from both sides of \cref{thm:bp}, we will show the following statement. 
\begin{lemma}
	With the above definitions, we have
	\[ \log(\per(A))\leq \sum_{e\in [n]\times [n]} \parens*{P_e\log(A_e/P_e)+\frac{1}{2}(1-P_e)\log(1-P_e)}. \] 
\end{lemma}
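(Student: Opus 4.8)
The plan is to reuse the derivation of \cref{sec:upperbound} up through \cref{eq:entropyub}, and then replace the final appeal to \cref{lem:ps} by a pointwise sharpening of it. Since $P$ is still the marginal matrix of $\mu$, \cref{eq:entropyub} already gives
\[ \log(\per(A)) \leq \sum_{e\in[n]\times[n]}P_e\log(A_e/P_e) + \sum_{i\in[n]}\sum_{k\in[n]}P_{i,k}\ExX*{\pi\sim S_n}{\log\sum_{j\geq_\pi k}P_{i,j}}, \]
so by \cref{def:beta} it suffices to prove, for each fixed row $i$ and $p=(P_{i,1},\dots,P_{i,n})\in\Delta_n$, that
\[ \sum_{k}p_k\ExX*{\pi\sim S_n}{\log\sum_{j\geq_\pi k}p_j} \leq \tfrac12\sum_{k}(1-p_k)\log(1-p_k). \]
As in \cref{lem:perms} I would pair each $\pi$ with its reverse $\overline\pi$; since $\sum_{j\geq_{\overline\pi}k}p_j=\sum_{j\leq_\pi k}p_j$, averaging over the measure-preserving map $\sigma\mapsto\sigma\circ\pi$ and rewriting the resulting prefix/suffix-sum terms via \cref{def:pk} reduces the claim to the following pointwise strengthening of \cref{lem:ps}: for every $q\in\Delta_n$,
\[ \phi(q)\leq-\sum_{k}(1-q_k)\log(1-q_k). \]
Call this $(\star)$. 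Its right-hand side is at most $\log2$, with equality exactly at $q=\1_{\{i,j\}}/2$, so $(\star)$ refines \cref{lem:ps} and is tight on precisely the same points.

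To prove $(\star)$ I would rerun the reduction behind \cref{lem:ps}. Write $g(x)=(1-x)\log(1-x)$; this is convex on $[0,1)$ with $g(0)=0$, hence superadditive, $g(r)+g(s)\leq g(r+s)$. Thus for $(q,r,s,t)\in\Delta_4$ with $r+s\leq\gamma$, \cref{lem:reduction} gives $\phi(q,r,s,t)\leq\phi(q,r+s,t)$ and superadditivity gives $g(r)+g(s)\leq g(r+s)$, so the quantity $\phi(q)+\sum_k g(q_k)$ does not decrease when two consecutive coordinates summing to at most $\gamma$ are merged into one. Exactly as in \cref{lem:ps}, for $n\geq4$ some consecutive pair sums to at most $1/2<\gamma$, so by induction $(\star)$ reduces to $n\in\{1,2,3\}$. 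For $n=1$ and $n=2$ a direct computation gives $\phi(q)=-\sum_k(1-q_k)\log(1-q_k)$, i.e.\ $(\star)$ holds with equality.

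The crux --- and the step I expect to be the main obstacle --- is $n=3$, exactly the case that forced a computer search for \cref{lem:ps}; but the pointwise bound $(\star)$ turns out to have a short analytic proof. Eliminating $r=1-q-s$ and simplifying (using $q+r=1-s$, $r+s=1-q$, $q+s=1-r$), $(\star)$ for $n=3$ becomes
\[ q\log\frac{q}{1-s}+s\log\frac{s}{1-q}\leq(q+s)\log(q+s). \]
For $q,s>0$ (the cases $q=0$ or $s=0$ give equality), I would divide by $q+s$ and apply Jensen's inequality to the concave function $\log$ with weights $\frac{q}{q+s},\frac{s}{q+s}$, reducing to
\[ \frac{q^2}{1-s}+\frac{s^2}{1-q}\leq(q+s)^2. \]
Clearing denominators (using $q,s<1$) and collecting everything in $c=q+s$ and $d=qs$, this is equivalent to $d(c-1)(c-2)\geq0$, i.e.\ $qs(1-q-s)(2-q-s)\geq0$, which is evident since $q+s\leq1$. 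This closes $n=3$ and hence the whole proof. The only subtleties are the signs in the merging step (so that adding $\sum_k g(q_k)$ to $\phi$ pushes the reduction in the favorable direction) and the degenerate cases where some coordinate vanishes.
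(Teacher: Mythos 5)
Your proof is correct, and it follows the paper's own route exactly up to the key single-row inequality: starting from \cref{eq:entropyub}, decomposing by rows, and pairing $\pi$ with $\overline{\pi}$ to reduce to the statement that for every $p\in\Delta_n$,
\[ \sum_{k}(1-p_k)\log(1-p_k)\;\geq\;\sum_{k}p_k\log(p_1+\dots+p_k)+\sum_{k}p_k\log(p_k+\dots+p_n), \]
which is exactly your $(\star)$ and is the paper's \cref{lem:bprem}. Where you diverge is in how this inequality is proved. The paper gives a direct argument valid for all $n$ at once: it rewrites each sum by Abel summation as $\sum_k(p_1+\dots+p_{k-1})\log\bigl(\tfrac{p_1+\dots+p_{k-1}}{p_1+\dots+p_k}\bigr)$ (and its suffix analogue), and then proves the inequality term-by-term in $k$ using convexity of $x\mapsto x\log(x/(x+c))$. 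You instead recycle the merging machinery of \cref{lem:reduction}: since $g(x)=(1-x)\log(1-x)$ is superadditive, the quantity $\phi(p)+\sum_k g(p_k)$ does not decrease under merging a consecutive pair with sum at most $\gamma$, which reduces $(\star)$ to $n\leq 3$; the cases $n=1,2$ are equalities, and your $n=3$ computation (Jensen on $\log$ followed by the identity $(q+s)^2(1-q)(1-s)-q^2(1-q)-s^2(1-s)=qs(1-q-s)(2-q-s)\geq 0$) is correct --- I verified the algebra. Your route has the appeal of showing that the fractional-BP bound needs no computer search even though it reuses \cref{lem:reduction}; the paper's proof of \cref{lem:bprem} is shorter and entirely independent of \cref{sec:phi}, which matters since the paper advertises that \cref{sec:bp} can be read without it.

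One side remark in your write-up is wrong: $(\star)$ does \emph{not} refine \cref{lem:ps}, because $-\sum_k(1-q_k)\log(1-q_k)$ is not bounded by $\log 2$ on $\Delta_n$ (e.g.\ at $q=(1/n,\dots,1/n)$ it equals $-(n-1)\log(1-1/n)\to 1>\log 2$). This does not affect your proof, which never uses that comparison, but the sentence should be deleted.
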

\begin{proof}
	Our starting point is the permanent upper bound we have already derived in \cref{sec:upperbound}, namely \cref{eq:entropyub}:
	\[ \log(\per(A))\leq \sum_{e\in [n]\times [n]}P_e\log(A_e/P_e)+\sum_{i\in [n]}\sum_{k\in [n]}P_{i, k}\cdot \ExX*{\pi\sim S_n}{\log\sum_{j\geq_{\pi} k} P_{i, j}}. \]
	Comparing this to the statement we would like to prove, it is enough to show the following
	\[ \frac{1}{2}\sum_{e\in [n]\times [n]} (1-P_e)\log(1-P_e)\geq \sum_{i\in [n]}\sum_{k\in [n]}P_{i,k}\ExX*{\pi\sim S_n}{\log \sum_{j\geq_\pi k}P_{i,j}}. \]
	Note that we have, once again, removed all references to the matrix $A$, and now need to prove a statement about the doubly stochastic matrix $P$. We simplify further by decomposing the inequality in terms of the rows of $P$. We show that for each $i\in [n]$,
	\[ \frac{1}{2}\sum_{k\in [n]}(1-P_{i,k})\log(1-P_{i,k})\geq \sum_{k\in [n]}P_{i, k}\ExX*{\pi\sim S_n}{\log \sum_{j\geq_\pi k} P_{i,j}}. \]
	This is a statement about the $i$-th row of $P$, and thus can be abstracted away in terms of a stochastic vector $(p_1,\dots,p_n)\in \Delta_n$. We will show that for any such vector we have
	\[  \sum_{k\in [n]} (1-p_k)\log(1-p_k)\geq 2\ExX*{\pi\sim S_n}{\sum_{k\in [n]} p_k{\sum_{j\geq_\pi k}p_j}}. \]
	We would like to remove the averaging over $\pi$ from this statement; we do this by replacing the average on the right by the sum of the expression for $\pi$ and its reverse $\overline{\pi}$. Without loss of generality, we take $\pi$ to be the identity permutation and prove that
	\[ \sum_{k\in [n]}(1-p_k)\log(1-p_k)\geq \sum_{k\in [n]}p_k\log(p_1+\dots+p_k)+\sum_{k\in [n]}p_k\log(p_k+\dots+p_n). \]
	This will be the content of \cref{lem:bprem}. Proving it finishes the proof for \cref{thm:bp}.
\end{proof}

\begin{lemma}\label{lem:bprem}
	For any sotchastic vector $(p_1,\dots,p_n)\in \Delta_n$, we have
	\[ \sum_{k\in [n]}(1-p_k)\log(1-p_k)\geq \sum_{k\in [n]}p_k\log(p_1+\dots+p_k)+\sum_{k\in [n]}p_k\log(p_k+\dots+p_n). \]
\end{lemma}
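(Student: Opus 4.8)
The plan is to reduce \cref{lem:bprem} to a single-summand inequality by passing to prefix sums and applying summation by parts. Set $a_j = p_1 + \dots + p_j$ for $j = 0,\dots,n$, so $0 = a_0 \le a_1 \le \dots \le a_n = 1$ and $p_j = a_j - a_{j-1}$. Then $p_1 + \dots + p_k = a_k$ and $p_k + \dots + p_n = 1 - a_{k-1}$, and an Abel rearrangement converts the two right-hand sums into sums over increments, $\sum_k p_k\log(p_1+\dots+p_k) = \sum_{j=1}^n a_{j-1}\log(a_{j-1}/a_j)$ and $\sum_k p_k\log(p_k+\dots+p_n) = \sum_{j=1}^n (1-a_j)\log\tfrac{1-a_j}{1-a_{j-1}}$, while the left side is simply $\sum_{j=1}^n (1-a_j+a_{j-1})\log(1-a_j+a_{j-1})$. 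Consequently, writing
\[ g(a,b) = (1-b+a)\log(1-b+a) - a\log\tfrac{a}{b} - (1-b)\log\tfrac{1-b}{1-a} \qquad (0\le a\le b\le 1), \]
the inequality of \cref{lem:bprem} is exactly $\sum_{j=1}^n g(a_{j-1},a_j) \ge 0$, so it suffices to show $g(a,b)\ge 0$ for all $0\le a\le b\le 1$ (with the conventions $0\log 0 = 0$ and $x\log(y/x)\to 0$ as $x\to 0^+$).

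To prove this local inequality I would use a reweighting trick. Put $c = 1-b+a$; if $c = 0$ then $a=0$, $b=1$ and $g = 0$ trivially, so assume $c\in(0,1]$ and set $\lambda = a/c\in[0,1]$ (hence $1-\lambda = (1-b)/c$). Distributing $c\log c = a\log c + (1-b)\log c$ and regrouping yields the identity
\[ g(a,b) = c\Bigl(\lambda\log\tfrac{b}{\lambda} + (1-\lambda)\log\tfrac{1-a}{1-\lambda}\Bigr). \]
Both logarithms are nonnegative: $b \ge \lambda$ since $b(1-b+a) - a = (1-b)(b-a)\ge 0$, and $1-\lambda \le 1-a$ since $\lambda = a/c \ge a$ (because $c\le 1$). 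As $c\ge 0$, we get $g(a,b)\ge 0$, and summing over $j$ proves \cref{lem:bprem}. (If I did not spot this identity, a fallback is to fix $b-a =: p$ and note that $a\mapsto g(a,a+p)$ on $[0,1-p]$ is a constant plus two concave functions, hence concave, and vanishes at both endpoints, so it is nonnegative there; the reweighting is slicker and avoids any differentiation.)

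The main obstacle I expect is discovering the right local statement — namely that after the Abel rearrangement the inequality splits additively over $j$ and, somewhat surprisingly, holds summand by summand, with equality only at the degenerate configurations $a=0$, $a=b$, or $b=1$. Once that is in hand, the remaining content is the short algebraic identity rewriting $g(a,b)$ as $c$ times a sum of two manifestly nonnegative terms, together with the two one-line verifications $b\ge\lambda$ and $\lambda\ge a$; the summation by parts and the checking of boundary conventions are routine.
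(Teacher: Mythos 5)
Your proof is correct and follows essentially the same route as the paper: the identical Abel rearrangement of both right-hand sums, followed by the identical term-by-term reduction (your $g(a_{j-1},a_j)\ge 0$ is exactly the paper's local inequality $(1-p_k)\log(1-p_k)\ge q\log\frac{q}{q+p_k}+r\log\frac{r}{r+p_k}$ with $q=a_{j-1}$, $r=1-a_j$). The only divergence is the final step: the paper proves the local inequality via convexity of $x\mapsto x\log(x/(x+c))$ along the segment $q+r=1-p_k$ — precisely your stated fallback — whereas your primary argument uses the (correct) identity $g(a,b)=c\bigl(\lambda\log\frac{b}{\lambda}+(1-\lambda)\log\frac{1-a}{1-\lambda}\bigr)$ with both summands manifestly nonnegative, a slightly slicker finish.
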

\begin{proof}
	We will rewrite each sum on the r.h.s. We have
	\[ \sum_{k\in [n]}p_k\log(p_1+\dots+p_k)=\sum_{k\in [n]}(p_1+\dots+p_{k-1})\log\parens*{\frac{p_1+\dots+p_{k-1}}{p_1+\dots+p_{k}}}. \]
	Note that the term inside the sum is taken to be $0$ for $k=1$, because we generally take $x\log(1/x)$ evaluated at $0$ to be its limit at $x=0$, which is $0$. Similarly we have
	\[ \sum_{k\in [n]}p_k\log(p_k+\dots+p_n)=\sum_{k\in [n]}(p_{k+1}+\dots+p_n)\log\parens*{\frac{p_{k+1}+\dots+p_n}{p_{k}+\dots+p_{n}}}.  \]
	We will now prove our desired inequality term-by-term. For each $k$, we will prove that
	\begin{multline*}
		(1-p_k)\log(1-p_k)\geq \\ (p_1+\dots+p_{k-1})\log\parens*{\frac{p_1+\dots+p_{k-1}}{p_1+\dots+p_{k}}}+(p_{k+1}+\dots+p_n)\log\parens*{\frac{p_{k+1}+\dots+p_n}{p_{k}+\dots+p_{n}}}.
	\end{multline*}
	Let $q=p_1+\dots+p_{k-1}$ and $r=p_{k+1}+\dots+p_n$. Then it is enough to prove that
	\[ (1-p_k)\log(1-p_k)\geq q\log(q/(q+p_k))+r\log(r/(r+p_k)). \]
	Note that the function $f(x):=x\log(x/(x+c))$ is convex over $\R_{\geq 0}$, for any fixed constant $c\geq 0$; the second derivative, $\partial^2 f/\partial x^2$, is simply $c^2/x(c+x)^2>0$. This means that as we fix $p_k$ and vary $q$ and $r$, the r.h.s.\ of the above becomes a convex function over the line segment $\set{(q, r)\given q+r=1-p_k, q\geq 0, r\geq 0}$. So the r.h.s.\ is maximized at one of the endpoints of this line segment, that is when $\set{q, r}= \set{0, 1-p_k}$. But, it is easy to see that the r.h.s.\ at either of the two endpoints is simply equal to $(1-p_k)\log(1-p_k)$, which is the l.h.s.
\end{proof}

To finish this section, we prove that $\bp_{-1/2}(A)$ is a $\sqrt{e}^n$ approximation to $\per(A)$.
\begin{theorem}
	For any matrix $A\in \R_{\geq 0}^{n\times n}$, we have
	\[ \bp_{-1/2}(A)\geq \per(A)\geq \sqrt{e}^{-n}\bp_{-1/2}(A). \]
\end{theorem}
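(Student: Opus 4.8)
The left inequality $\bp_{-1/2}(A)\geq\per(A)$ is precisely \cref{thm:bp}, which we have just established, so the plan is to focus entirely on the right inequality $\per(A)\geq\sqrt{e}^{-n}\bp_{-1/2}(A)$, equivalently $\bp_{-1/2}(A)\leq\sqrt{e}^{n}\per(A)$. Since $\bp_{-1}(A)=\bethe(A)\leq\per(A)$ by \cref{thm:lowerbound}, it suffices to prove the purely analytic statement $\bp_{-1/2}(A)\leq\sqrt{e}^{n}\,\bp_{-1}(A)$; after this reduction the matrix $A$ plays no further role.

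To prove this I would take logarithms and compare the two maximization problems in \cref{def:bp}. Writing the $\gamma$-objective as $\sum_{i,j}\bigl(P_{i,j}\log(A_{i,j}/P_{i,j})-\gamma(1-P_{i,j})\log(1-P_{i,j})\bigr)$, let $P\in\B_n$ be a maximizer for $\gamma=-1/2$. Plugging this same $P$ into the $\gamma=-1$ objective gives a (possibly suboptimal) lower bound on $\log\bp_{-1}(A)$, and subtracting yields
\[ \log\bp_{-1/2}(A)-\log\bp_{-1}(A)\leq \tfrac12\sum_{i,j}(1-P_{i,j})\log\bigl(1/(1-P_{i,j})\bigr). \]
So everything reduces to showing the right-hand side is at most $n/2$, i.e.\ that $\sum_{i,j}(1-P_{i,j})\log(1/(1-P_{i,j}))\leq n$ for every doubly stochastic $P$.

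The key step is the elementary inequality $-\log(1-x)\leq x/(1-x)$ for $x\in[0,1)$, immediate from comparing the series $\sum_{k\geq1}x^{k}/k$ and $\sum_{k\geq1}x^{k}$, which gives $(1-P_{i,j})\log(1/(1-P_{i,j}))\leq P_{i,j}$ entrywise. Summing over a single row of $P$, whose entries sum to $1$, bounds that row's contribution by $1$; summing over all $n$ rows gives the bound $n$. Hence $\log\bp_{-1/2}(A)-\log\bp_{-1}(A)\leq n/2$, i.e.\ $\bp_{-1/2}(A)\leq\sqrt{e}^{n}\,\bethe(A)\leq\sqrt{e}^{n}\per(A)$, completing the proof. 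There is no serious obstacle; the only points requiring care are that the same feasible $P$ is used for both optimization problems — so the comparison must start from the $\bp_{-1/2}$ maximizer, not the $\bethe$ maximizer — and the choice of the bound $-\log(1-x)\leq x/(1-x)$, which is exactly what lets the row-sum constraint close the estimate.
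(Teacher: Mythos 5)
Your proposal is correct and follows essentially the same route as the paper: reduce to comparing the $\gamma=-1/2$ and $\gamma=-1$ objectives at the same doubly stochastic $P$, and bound the gap $\frac12\sum_{e}(1-P_e)\log(1/(1-P_e))$ by $n/2$ via the inequality $(1-x)\log(1/(1-x))\leq x$ (the paper states this multiplicatively as $(1-P_e)^{1-P_e}\geq e^{-P_e}$) together with the row sums of $P$. No gaps.
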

\begin{proof}
	We have already shown that $\bp_{-1/2}(A)\geq \per(A)$. To prove the other inequality, we note that, as proved by \textcite{Gur11}, $\bethe(A)=\bp{-1}(A)$ is a lower bound on $\per(A)$. Thus, it is enough to prove that
	\[  \bp_{-1}(A)\geq \sqrt{e}^{-n}\bp_{-1/2}(A). \]
	To show this, it is enough to prove that for any doubly stochastic matrix $P$, the ratio of the terms in \cref{def:bp} for $\gamma=-1$ and $\gamma=-1/2$ is at most $\sqrt{e}^{-n}$; that is
	\[ \sqrt{\prod_{e\in [n]\times [n]}{(1-P_e)^{1-P_e}}}\geq \sqrt{e}^{-n}. \]
	But this follows from the fact that $(1-P_e)^{1-P_e}\geq e^{-P_e}$ for each $e$, and that $\sum_{e\in [n]\times [n]}P_e=n$.
\end{proof}
		
	\printbibliography
\end{document}